\definecolor{myred}{HTML}{AF2525}
\newcommand\txtred[1]{{\color{myred}{#1}}}
\definecolor{mygreen}{HTML}{2A8F56}
\newcommand\txtgreen[1]{{\color{mygreen}{#1}}}
\definecolor{myblue}{named}{Blue}
\newcommand{\spara}[1]{\smallskip\noindent\textbf{#1}}
\newenvironment {squishlist}
{\begin{list}{$\bullet$}
  { \setlength{\itemsep}{0pt}
     \setlength{\parsep}{3pt}
     \setlength{\topsep}{3pt}
     \setlength{\partopsep}{0pt}
     \setlength{\leftmargin}{1.5em}
     \setlength{\labelwidth}{1em}
     \setlength{\labelsep}{0.5em} } }
{\end{list}}
\newtheorem{problem}{Problem}
\newtheorem{theorem}{Theorem}
\newcommand{\bigOh}{\ensuremath{\mathcal{O}}\xspace}
\newcommand{\simop}{\ensuremath{\mathrm{sim}}\xspace}
\newcommand{\dotop}{\ensuremath{\mathrm{dot}}\xspace}
\newcommand{\stream}{\ensuremath{\mathcal{S}}\xspace}
\newcommand{\data}{\ensuremath{\mathcal{D}}\xspace}
\newcommand{\novecs}{\ensuremath{n}\xspace}
\newcommand{\thr}{\ensuremath{\theta}\xspace}
\newcommand{\decay}{\ensuremath{\lambda}\xspace}
\newcommand{\hor}{\ensuremath{\mathcal{\tau}}\xspace}
\newcommand{\dt}{\ensuremath{\Delta t}\xspace}
\newcommand{\simt}{\ensuremath{\simop_{\dt}}\xspace}
\newcommand{\ff}[1]{\ensuremath{e^{-\decay\dt_{#1}}}\xspace}
\newcommand{\pscore}{\ensuremath{\mathtt{pscore}}\xspace}
\newcommand{\rscore}{\ensuremath{\mathtt{rem\-score}}\xspace}
\newcommand{\ltbound}{\ensuremath{\mathtt{l2bound}}\xspace}
\newcommand{\dimension}{\ensuremath{d}\xspace}
\newcommand{\reals}{\ensuremath{\mathbb{R}}\xspace}
\newcommand{\elltwo}{\ensuremath{\ell_2}\xspace}
\newcommand{\norm}[1]{\ensuremath{||{#1}||}\xspace}
\newcommand{\twonorm}[1]{\ensuremath{||{#1}||_2}\xspace}
\newcommand{\abs}[1]{\ensuremath{|{#1}|}\xspace}
\newcommand{\vect}[1]{\ensuremath{\mathbf{#1}}\xspace}
\newcommand{\vecx}{\ensuremath{\vect{x}}\xspace}
\newcommand{\xcoord}{\ensuremath{\mathrm{x}}\xspace}
\newcommand{\vecy}{\ensuremath{\vect{y}}\xspace}
\newcommand{\ycoord}{\ensuremath{\mathrm{y}}\xspace}
\newcommand{\vm}{\ensuremath{{\mathrm{vm}}}\xspace}
\newcommand{\maxsymb}{m}
\newcommand{\cmvec}{\ensuremath{\vect{\maxsymb}}\xspace}
\newcommand{\cm}{\ensuremath{\mathrm{\maxsymb}}\xspace}
\newcommand{\wvecy}{\ensuremath{\widehat{\cmvec}}\xspace}
\newcommand{\wycoord}{\ensuremath{\widehat{\cm}}\xspace}
\newcommand{\wvecydecay}{\ensuremath{\widehat{\cmvec}^{\decay}}\xspace}
\newcommand{\wycoorddecay}{\ensuremath{\widehat{\cm}^{\decay}}\xspace}
\newcommand{\prefix}[1]{\ensuremath{{#1}'}\xspace}
\newcommand{\suffix}[1]{\ensuremath{{#1}''}\xspace}
\newcommand{\timestamp}[1]{\ensuremath{t({#1})}\xspace}
\newcommand{\invind}{\ensuremath{\mathcal{I}}\xspace}
\newcommand{\ind}{\ensuremath{I}\xspace}
\newcommand{\resind}{\ensuremath{\mathcal{R}}\xspace}
\newcommand{\pointer}[1]{\ensuremath{\iota({#1})}\xspace}
\newcommand{\candidates}{\ensuremath{C}\xspace}
\newcommand{\parray}{\ensuremath{Q}\xspace}
\newcommand{\pairs}{\ensuremath{P}\xspace}
\newcommand{\sz}{\ensuremath{{\mathrm{sz}}}\xspace}
\newcommand{\rs}{\ensuremath{{\mathrm{rs}}}\xspace}
\newcommand{\bs}{\ensuremath{{\mathrm{b}}}\xspace}
\newcommand{\ps}{\ensuremath{{\mathrm{ps}}}\xspace}
\newcommand{\ds}{\ensuremath{{\mathrm{ds}}}\xspace}
\newcommand{\window}{\ensuremath{{W}}\xspace}
\newcommand{\dataset}[1]{\texttt{#1}\xspace}
\newcommand{\rcvo}{\dataset{RCV1}}
\newcommand{\webspam}{\dataset{WebSpam}}
\newcommand{\blogs}{\dataset{Blogs}}
\newcommand{\tweets}{\dataset{Tweets}}
\newcommand{\apss}{{\sc{apss}}\xspace}
\newcommand{\sssj}{{\sc{sssj}}\xspace}
\newcommand{\ap}{\ensuremath{\text{\tt{AP}}}\xspace}
\newcommand{\ii}{\ensuremath{\text{\tt{INV}}}\xspace}
\newcommand{\ltap}{\ensuremath{\text{\tt{L2AP}}}\xspace}
\newcommand{\pure}{\ensuremath{\text{\tt{L2}}}\xspace}
\newcommand{\minibatch}{\ensuremath{\text{\tt{MB}}}\xspace}
\newcommand{\x}{\ensuremath{\text{\tt{IDX}}}\xspace}
\newcommand{\minibatchx}{\ensuremath{\text{\minibatch-\x}}\xspace}
\newcommand{\streaming}{\ensuremath{\text{\tt{STR}}}\xspace}
\newcommand{\streamingx}{\ensuremath{\text{\streaming-\x}}\xspace}
\newcommand{\construct}{\ensuremath{\text{\tt{Ind\-Constr}}}\xspace}
\newcommand{\constructx}{\ensuremath{\text{\construct-\x}}\xspace}
\newcommand{\generate}{\ensuremath{\text{\tt{Cand\-Gen}}}\xspace}
\newcommand{\generatex}{\ensuremath{\text{\generate-\x}}\xspace}
\newcommand{\verify}{\ensuremath{\text{\tt{Cand\-Ver}}}\xspace}
\newcommand{\verifyx}{\ensuremath{\text{\verify-\x}}\xspace}
\newcommand{\cg}{\ensuremath{\text{\tt{CG}}}\xspace}
\newcommand{\cv}{\ensuremath{\text{\tt{CV}}}\xspace}
\newcommand{\ic}{\ensuremath{\text{\tt{IC}}}\xspace}
\newcommand{\define}{\ensuremath{{\,\leftarrow\,}}\xspace}
\let\emptyset\varnothing
\title{Streaming similarity self-join}
\author{
Gianmarco De Francisci Morales\\
Qatar Computing Research Institute\\
\texttt{gdfm@acm.org}
\and
Aristides Gionis\\
Aalto University\\
\texttt{aristides.gionis@aalto.fi}
}
\begin{document}
\maketitle

\enlargethispage{\baselineskip}
\begin{abstract}
We introduce and study the problem of
computing the similarity self-join in a streaming context (\sssj),
where the input is an unbounded stream of items arriving continuously.
The goal is to find all pairs of items in the stream
whose similarity is greater than a given threshold.
The simplest formulation of the problem requires unbounded memory,
and thus, it is intractable.
To make the problem feasible, we introduce the notion of \emph{time-dependent similarity:}
the similarity of two items decreases with the difference in their arrival time.

By leveraging the properties of this time-dependent similarity function,
we design two algorithmic frameworks to solve the \sssj problem.
The first one, MiniBatch (\minibatch),
uses existing index-based filtering techniques for the static version of the problem,
and combines them in a pipeline.
The second framework, Streaming (\streaming),
adds \emph{time filtering} to the existing indexes,
and integrates new time-based bounds deeply in the working of the algorithms.
We also introduce a new indexing technique (\pure),
which is based on an existing state-of-the-art indexing technique (\ltap),
but is optimized for the streaming case.

Extensive experiments show that the \streaming algorithm,
when instantiated with the \pure index,
is the most scalable option across a wide array of datasets and parameters.
\end{abstract}

%
\section{Introduction}

\begin{figure}[t]
\vspace{-\baselineskip}
\begin{center}
\includegraphics[width=0.8\columnwidth]{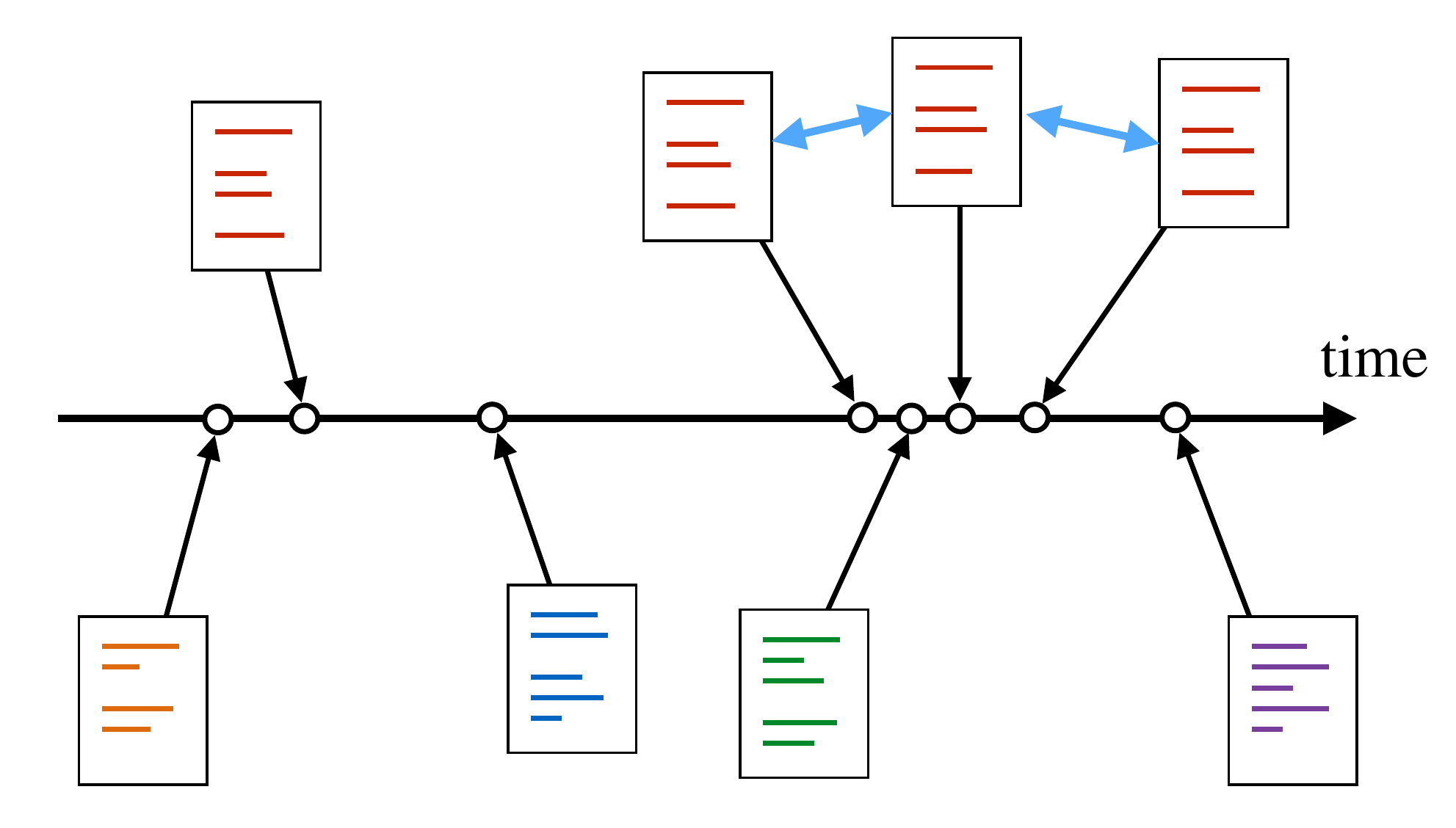}
\caption{\label{fig:illustration}
Timestamped documents arrive as stream.
The documents on top (in red) have similar content.
Among all pairs of similar documents,
we are interested in those that arrive close in time.
In the example,
out of all 4-choose-2 pairs
only two pairs are reported
(shown with blue arrows).
}
\end{center}
\vspace{-1.5\baselineskip}
\end{figure}

\emph{Similarity self-join}
is the problem of finding
\emph{all pairs} of similar objects in a given dataset.
The problem is related to the \emph{similarity join}
operator~\cite{arasu2006efficient,chaudhuri2006primitive},
which has been studied extensively in the database and data-mining communities.
The similarity self-join is an essential component
in several applications, including
plagiarism detection~\cite{cho2000finding,hoad2003methods},
query refinement~\cite{sahami2006web},
document clustering~\cite{broder1997syntactic,haveliwala2000scalable,bohm2000clustering},
data cleaning~\cite{chaudhuri2006primitive},
community mining~\cite{spertus2005evaluating},
near-duplicate record detection~\cite{xiao2011efficient},
and collaborative filtering~\cite{das2007google}.

The similarity self-join problem is inherently quadratic.
In fact, the brute-force $\bigOh(n^2)$ algorithm
that computes the similarity between all pairs
is the best one can hope for, in the worst case,
when exact similarity is required
and when dealing with arbitrary objects.
In practice, it is possible to obtain scalable algorithms
by leveraging structural properties of specific problem instances.
A typical desiderata here is to design \emph{output-sensitive} algorithms.\footnote{\url{https://en.wikipedia.org/wiki/Output-sensitive_algorithm}}

In many real-world applications
objects are represented as \emph{sparse} vectors in a high-dimensional Euclidean space, 
and similarity is measured as the dot-product of unit-normalized vectors ---
or equivalently, cosine similarity.
The similarity self-join problem
asks for all pairs of objects whose similarity is above a predefined threshold~\thr.
Efficient algorithms for this setting are well-developed,
and rely on pruning based on inverted indexes,
as well as a number of different geometric
bounds~\cite{anastasiu2014l2ap,bayardo2007scaling-up,chaudhuri2006primitive,xiao2011efficient};
details on those methods are presented in the following sections.

Computing similarity self-join is a problem
that  is germane not only for static data collections,
but also for data streams.
Two examples of real-world applications that motivate
the problem of similarity self-join in the streaming setting
are the following.

\enlargethispage{\baselineskip}
\spara{Trend detection}:
existing algorithms for trend detection in microblogging platforms,
such as Twitter,
rely on identifying hashtags or other terms
whose frequency suddenly increases.
A more focused and more granular trend-detection approach
would be to identify a set of posts, whose frequency increases,
and which share a certain \emph{fraction} of hashtags or terms.
For such a trend-detection method
it is essential to be able to find similar pairs of posts in a data stream.

\spara{Near-duplicate item filtering}:
consider again a
micro\-blog\-ging platform,
such as Twitter.
When an event occurs, users may receive
multiple near-copies of posts related to the event.
Such posts often appear consecutively in the feed of users,
thus cluttering their information stream and degrading their experience.
Grouping these near-copies or filtering them out is
a simple way to improve the user experience.

\smallskip
Surprisingly, the problem of streaming similarity self-join
has not been considered in the literature so far.
Possibly, because the similarity self-join operator in principle
requires \emph{unbounded} memory:
one can never forget a data item,
as it may be similar
to another one that comes far in the future.

To the best of our knowledge,
this paper is the first to address the similarity self-join problem
in the streaming setting.
We overcome the inherent unbounded-memory bottleneck
by introducing a temporal factor to the similarity operator.
Namely, we consider two data items similar,
only if they have 
arrived within a \emph{short time span}. 
More precisely, we define the \emph{time-dependent similarity} of two data items
to be their content-based cosine similarity
multiplied by a factor that decays exponentially with the
difference in their arrival times.
This time-dependent factor allows us to drop old items, 
as they cannot be similar to any item that arrives after a certain time horizon~\hor.
The concept is illustrated in Figure~\ref{fig:illustration}.

The time-dependent similarity outlined above
is very well-suited for both our motivating applications ---
trend detection and near-duplicate filtering:\
in both cases we are interested in identifying
items that not only are content-wise similar,
but have also arrived within a short time interval. 

Akin to previous approaches for the similarity self-join problem,
our method relies on indexing techniques.
Previous approaches use different types of \emph{index filtering}
to reduce the number of candidates
returned by the index for full similarity evaluation.
Following this terminology,
we use the term \emph{time filtering} to refer to the
property of the time-dependent similarity
that allows to drop old items from the index.

We present two different algorithmic frameworks for
the streaming similarity self-join problem,
both of which rely on the time-filtering property.
Both frameworks can be instantiated with indexing schemes
based on existing ones for the batch version of the problem.
The first framework, named MiniBatch (\minibatch),
uses existing indexing schemes off-the-shelf.
In particular, it uses two indexes in a pipeline,
and it drops the older one when it becomes old enough.
The second framework we propose, named Streaming (\streaming),
modifies existing indexing schemes so that time filtering is incorporated internally
in the index.


One of our contributions is a new index, \pure,
which combines state-of-the art bounds for index pruning from \ltap~\cite{anastasiu2014l2ap}
in a way that is optimized for stream data.
The superior performance of \pure stems from the fact
that it uses bounds that
($i$) are effective in reducing the number of candidate pairs,
($ii$) do not require collecting statistics over the data stream
(which need to be updated as the stream evolves),
($iii$) lead to very lightweight index maintenance,
while
($iv$) allowing to drop data as soon as they become too old to be similar to any item currently read.
Our experimental evaluation demonstrates
that the {\pure} index, incorporated in the {\streaming} framework,
is the method of choice for all datasets we try across a wide range of parameters.

\enlargethispage{1.5\baselineskip}
In brief, our contributions can be summarized as follows.

\begin{squishlist}
\item
We introduce the similarity self-join problem in data streams.
\item
We propose a novel time-dependent similarity measure,
which allows to forget data items when they become old.
\item
We show how to solve the proposed problem
within two different algorithmic frameworks, {\minibatch} and {\streaming}:
For {\minibatch},
any state-of-the-art indexing scheme can be used as a black box.
For {\streaming},
we adapt and extend the existing schemes to be well-suited for streaming data.
\item
We perform an extensive experimental evaluation,
and test the proposed algorithms
on datasets with different characteristics and for a large range of
parameters.
\end{squishlist}

Finally, we note that
all the code\footnote{{\small Code: \url{http://github.com/gdfm/sssj}}} we developed and
all datasets\footnote{{\small Data: \url{http://research.ics.aalto.fi/dmg/sssj-datasets/}}} used for validation
are publicly available.

\enlargethispage{\baselineskip}
\section{Related Work}
The problem of \emph{streaming similarity self-join} has not been studied previously.
Work related to this problem can be classified into two main areas:
similarity self-join
(also known as all-pairs similarity), and streaming similarity.

\spara{Similarity self-join.}
The similarity self-join problem has been studied extensively.
It was introduced by~\citet{chaudhuri2006primitive},
and many works have followed up~\citep{arasu2006efficient,awekar2009fast-matching,bayardo2007scaling-up,chaudhuri2006primitive,xiao2011efficient,xiao2008ppjoin},
including approaches that use parallel computation~\citep{afrati2011fuzzyjoin,deFrancisciMorales2010aps,baraglia2010document,vernica2010efficient,alabduljalil2013optimizing}.
Discussing all these related papers is out of the scope of this work.
The most relevant approaches are those by \citet{bayardo2007scaling-up},
which significantly improved the scalability of the previous indexing-based methods, and
by \citet{anastasiu2014l2ap},
which represents the current state-of-the-art for sequential batch algorithms.
Our work relies heavily on these two papers, and we summarize their common \emph{filtering framework} in Section~\ref{section:framework}.
For a good overview, refer to the work of~\citet{anastasiu2014l2ap}.

\spara{Streaming similarity.}
The problem of computing similarities in a stream of vectors has received surprisingly little attention so far.
Most of the work on streaming similarity focuses on time series~\citep{gao2002continually,keogh2005exact,lian2009efficient}.

\citet{lian2011similarity} study the similarity join problem for \emph{uncertain} streams of vectors 
in the sliding-window model.
They focus on uncertain objects moving in a low-dimensional Euclidean space ($\dimension \in [2,5]$).
Similarity is measured by the Euclidean distance,
and therefore the pruning techniques are very different than the ones employed in our work;
e.g., they use space-partition techniques,
and use temporal correlation of samples to improve the efficiency of their algorithm.

\citet{wang2009scalable} study the more general problem of multi-way join with expensive
user-defined functions (UDFs) in the sliding-window model.
They present a time-slicing approach,
which has some resemblance to the time-based pruning presented in this paper.
However, their focus in on distributing the computation on a cluster via pipelined parallelism,
and ensuring that the multiple windows of the multi-way join align correctly.
Differently from this paper,
their method treats UDFs as a black box.

\citet{valari2013continuous} focus on graphs, rather than vectors, and study the case of Jaccard similarity of nodes on a sliding window over an edge stream.
They apply a count-based pruning on edges by keeping a fixed-size window,
and leveraging an upper bound on the similarity within the next window.
Instead, in this work we consider time-based pruning,
without any assumption on the frequency of arrival of items in the stream.
Furthermore, the arrival of edges changes the similarity of the nodes,
and therefore different indexing techniques need to be applied;
e.g., nodes can never be pruned.

\section{Problem Statement}
\label{section:problem}

We consider data items represented as vectors in a $\dimension$-dimensional
Euclidean space $\reals^\dimension$.
In real-world applications, the dimensionality $\dimension$ is typically high,
while the data items are sparse vectors.
Given two vectors $\vecx=\langle \xcoord_1\ldots \xcoord_\dimension \rangle$ and
$\vecy=\langle \ycoord_1\ldots \ycoord_\dimension \rangle$,
their similarity $\simop(\vecx,\vecy)$
is the dot-product
\[
\simop(\vecx,\vecy) =
\dotop(\vecx,\vecy) =
\vecx\cdot\vecy =
\sum_{j=1}^{\dimension} \xcoord_j\, \ycoord_j.
\]
We assume that all vectors {\vecx} are
normalized to unit length, i.e.,
$\twonorm{\vecx} =1$,
which implies that the
dot-product
of two vectors is equal to their cosine similarity;
$\cos(\vecx,\vecy) = \dotop(\vecx,\vecy)$.

In the standard \emph{all-pairs similarity search} problem (\apss),
also known as \emph{similarity self-join},
we are given a set of vectors and a similarity threshold \thr,
and the goal is to find all pairs of vectors
$({\vecx}, {\vecy})$ for which $\simop(\vecx,\vecy)\ge\thr$.

In this paper we assume that
the input items arrive as a data stream.
Each vector $\vecx$ in the input stream
is timestamped with the time of its arrival $\timestamp{\vecx}$,
and the stream is denoted by
$\stream=\langle\ldots,(\vecx_i,\timestamp{\vecx_i}),(\vecx_{i+1},\timestamp{\vecx_{i+1}},\ldots\rangle$.

We define the similarity of two vectors
{\vecx} and {\vecy}
by considering not only their coordinates ($\xcoord_j, \ycoord_j$),
but also the difference in their arrival times in the input stream
$\dt_{\vecx \vecy} = \abs{\timestamp{\vecx}-\timestamp{\vecy}}$.
For fixed coordinates,
the larger the arrival time difference of two vectors,
the smaller their similarity.
In particular,
given two timestamped vectors {\vecx} and {\vecy}
we define their \emph{time-dependent similarity} $\simt(\vecx,\vecy)$ as
\[
\simt(\vecx,\vecy) =
\dotop(\vecx,\vecy)\, e^{- \decay\abs{\timestamp{\vecx}-\timestamp{\vecy}}},
\]
where \decay is a \emph{time-decay} parameter.

This time-dependent similarity reverts to the standard dot-product (or cosine)
similarity when $\dt_{\vecx \vecy}=0$ or $\decay=0$,
and it goes to zero as $\dt_{\vecx \vecy}$ approaches infinity,
at an exponential rate modulated by $\decay$.
As in the standard \apss\ problem,
given a similarity threshold \thr,
two vectors {\vecx} and {\vecy} are called \emph{similar}
if their time-dependent similarity is above the threshold, i.e.,
$\simt(\vecx,\vecy) \geq \thr$.

We can now define the streaming version of the \apss problem,
called the \emph{streaming similarity self-join} problem (\sssj).

\begin{problem}[\sssj]
\label{problem:streaming-apss}
Given a stream of timestamped vectors \stream,
a similarity threshold \thr,
and a time-decay factor \decay,
output all pairs of vectors
$({\vecx}, {\vecy})$ in the stream such that $\simt(\vecx,\vecy)\ge\thr$.
\end{problem}

\spara{Time filtering.}
The main challenge of computing a self-join in a data stream
is that, in principle, unbounded memory is required,
as an item can be similar with any other item that will arrive
arbitrarily far in the future.
%
By adopting the time-dependent similarity measure {\simt},
we introduce a forgetting mechanism,
which not only is intuitive from an application point-of-view,
but also allows to overcome the unbounded-memory requirement.
In particular,
since for {\elltwo}-normalized vectors $ \dotop(\vecx,\vecy)\le 1$,
we have
\[
\simt(\vecx,\vecy) =  \dotop(\vecx,\vecy) \, \ff{\vecx \vecy} \leq \ff{\vecx \vecy}.
\]
Thus, $\dt_{\vecx \vecy} > \decay^{-1} \log \thr^{-1}$
implies $\simt(\vecx,\vecy)<\thr$,
and as a result
a given vector cannot be similar to any vector that arrived
more than
\[
\hor = \frac{1}{\decay} \log \frac{1}{\thr}
\]
time units earlier.
Consequently, we can safely prune vectors that are older than \hor.
We call \hor the \emph{time horizon}.


\spara{Parameter setting.}
The \sssj problem, defined in Problem~\ref{problem:streaming-apss},
requires two parameters: a similarity threshold \thr
and a time-decay factor \decay.
The time-filtering property
suggests a simple methodology to set these two parameters:
\begin{squishlist}
\item [1.]
Select 
$\thr$
as the lowest value of the similarity
between two \emph{simultaneously-arriving}
vectors that are deemed \emph{similar}.
\item [2.]
Select 
$\hor$
as the smallest difference in arrival times
between two \emph{identical}
vectors that are deemed \emph{dissimilar}.
\item [3.]
Set $\decay = \hor^{-1} \log \thr^{-1}$.
\end{squishlist}
Setting $\thr$ and  $\hor$ in steps 1 and 2
depends on the application.

\spara{Additional notation.}
When referring to the non-streaming setting,
we consider a dataset
$\data=\{\vecx_1,\ldots,\vecx_\novecs\}$
consisting of {\novecs} vectors in $\reals^\dimension$.

Following \citeauthor{anastasiu2014l2ap},
we use the notation
$\prefix{\vecx}=\prefix{\vecx}_p=\langle\xcoord_1,..,\xcoord_{p-1},0,..,0\rangle$
and
to denote the \emph{prefix} 
of a vector \vecx. 
For a vector \vecx,
we denote by
$\vm_{\vecx}$ its maximum coordinate, by
${\Sigma}_{\vecx} = \sum_{j}\xcoord_j$ the sum of its coordinates, and by
$\abs{\vecx}$ the number of its non-zero coordinates.

Given a static dataset {\data} we use $\cm_{j}$
to refer to the maximum value along the $j$-th coordinate over all vectors in~{\data}.
All values of $\cm_{j}$ together compose the vector $\cmvec$.
Our methods use indexing schemes
which build an index incrementally, vector-by-vector.
We use $\wvecy$ to refer to the vector $\cmvec$,
restricted to the dataset that is already indexed,
and we write $\wycoord_{j}$ for its $j$-th coordinate.
Furthermore,
we use $\wvecydecay$ (and $\wycoorddecay_{j}$ for its $j$-th coordinate)
to denote a time-decayed variant of $\wvecy$,
whose precise definition is given in Section~\ref{section:framework:l2ap}.


%
%
%
%
%
%
%
%
%
%
%
%

%
\section{Overview of the approach}
\label{section:overview}

In this section we present
a high-level overview of
our main algorithms for the {\sssj} problem.

We present two different algorithmic frameworks:
{\minibatchx} (MiniBatch-\x) and {\streamingx} (Streaming-\x),
where {\x} is an indexing method for the {\apss} problem on static data.
To make the presentation of our algorithms more clear,
we first give an overview of the indexing methods for the static {\apss} problem,
as introduced in earlier
papers~\cite{anastasiu2014l2ap,bayardo2007scaling-up,chaudhuri2006primitive}.

All indexing schemes 
are based on building an \emph{inverted index}.
The index is a collection of \dimension \emph{posting lists},
$\invind=\{\ind_1,\ldots,\ind_\dimension\}$,
where the list $\ind_j$ contains all pairs
$(\pointer{\vecx},\xcoord_j)$
such that the $j$-th coordinate of vector {\vecx} is non-zero, i.e., $\xcoord_j \neq 0$.
Here $\pointer{\vecx}$ denotes a reference to vector {\vecx}.

Some methods optimize computation by not indexing the whole dataset.
In these cases, the un-indexed part is still needed in order to compute exact similarities.
Thus, we assume that a separate part of the index {\invind}
contains the un-indexed part of the dataset,
called \emph{residual direct index} \resind.

All schemes build the index incrementally,
while also computing similar pairs.
In particular,
we start with an empty index,
and we iteratively process the vectors in {\data}.
For each newly-processed vector~{\vecx},
we compute similar pairs $(\vecx,\vecy)$
for all vectors~{\vecy} that are already in the index.
Thereafter, we add
(some of) the non-zero coordinates of {\vecx} to the index.

Building the index and computing similar pairs
can be seen as a three-phase process:

\begin{squishlist}
\item [\textbf{index construction (\ic)}:]
adds new vectors to the 
index~{\invind}.
\item [\textbf{candidate generation (\cg)}:]
uses the index {\invind} to generate candidate similar pairs.
The candidate pairs may contain false positives
but no false negatives.
\item [\textbf{candidate verification (\cv)}:]
computes true similarities between candidate pairs,
and reports true similar pairs,
while dismissing false positives.
\end{squishlist}

More concretely,
for an indexing scheme {\x}
we assume that the following three primitives are available,
which correspond to the three phases outlined above:

\begin{squishlist}
\item[$(\invind, \pairs) \define \constructx(\data,\thr)$:]
Given a dataset $\data$
consisting of \novecs vectors,
and a similarity threshold \thr,
the function \constructx
returns in $\pairs=\{(\vecx,\vecy)\}$ all similar pairs $(\vecx,\vecy)$,
with $\vecx,\vecy\in\data$.
Additionally, {\constructx}
builds an index {\invind},
which can be used to find similar pairs between the vectors in {\data}
and another query vector~\vect{z}.

\item[$\candidates \define \generatex(\invind,\vecx,\thr)$:]
Given an index {\invind}, built on a dataset \data,
a vector \vecx,
and a similarity threshold~\thr,
the function {\generatex} returns
a  set of candidate vectors $\candidates=\{\vecy\}$,
which is a superset of all vectors that are similar to {\vecx}.

\item[$\pairs \define \verifyx(\invind,\vecx,\candidates,\thr)$:]
Given an index {\invind},
built on a dataset \data,
a vector \vecx,
a set of candidate vectors~$\candidates$,
and a similarity threshold~\thr,
the function {\verifyx} returns
the set
$\pairs=\{(\vecx,\vecy)\}$ of true similar pairs.
\end{squishlist}

Both proposed frameworks,
{\minibatchx} and {\streamingx},
rely on an indexing scheme \x,
and adapt it to the streaming setting by
augmenting it with the time-filtering property.
The difference between the two frameworks is in how this adaptation is done.
{\minibatchx} uses {\x} as a ``black box'':
it uses time filtering in order to
build independent instances of {\x} indexes,
and drops them when they become obsolete.
Conversely,
for {\streamingx}
we opportunely modify the indexing method {\x}
by directly applying time filtering.

The presentation of the {\streamingx} framework
is deferred to the next section,
after we present the details of the specific indexing methods {\x}
we use.

\begin{algorithm}[t]
\caption{\minibatchx\ (MiniBatch with index \x)}
\label{algorithm:minibatchx}
\Input{Data stream \stream, threshold \thr, decay \decay}
\Output{All pairs $\vecx,\vecy\in\stream$ s.t.\ $\simt(\vecx,\vecy)\ge\thr$}

$\invind \define \emptyset$\;
$t_0 \define 0; t \define 0$\;
$\hor \define \decay^{-1} \log \thr^{-1}$\;

\While{$\text{\emph{true}}$} {
	$\window \define \emptyset$\;
	$t_0 \define t_0 + \hor$\;
	\While{$t \le t_0 + \hor$} {
		$\vecx \define \texttt{read}(\stream)$\;
		$t \define t(\vecx)$\;
		$\candidates \define \generatex(\invind,\vecx,\thr)$\;
		$\pairs \define \verifyx(\invind,\vecx,\candidates,\thr)$\;
		\Report{$\mathtt{ApplyDecay} (\pairs, \decay)$}\;
		$\window \define \window \cup \{\vecx\}$\;
	}
	$(\invind, \pairs) \define \constructx(\window,\thr)$\;
	\Report{$\mathtt{ApplyDecay} (\pairs, \decay)$}\;
}
\end{algorithm}

The {\minibatchx} framework is presented below,
as we only need to know the specifications of an index {\x},
not its internal workings.
In particular, we only assume the three primitives,
{\constructx},
{\generatex}, and
{\verifyx}, are available.

The {\minibatchx} framework
works in time intervals of duration~\hor.
During the $k$-th time interval
it reads all vectors from the stream, and stores them in a buffer $\window$.
At the end of the time interval,
it invokes {\constructx} to report all similar pairs in $\window$,
and to build an index $\invind$ on $\window$.
During the $(k+1)$-th time interval,
the buffer $\window$ is reset and used to store the new vectors from the stream.
At the same time,
\minibatchx queries $\invind$ with
each vector {\vecx} read from the stream,
to find
similar pairs between {\vecx} and vectors in the previous time interval.
Similar pairs are computed by invoking the functions
\generatex\ and \verifyx.
At the end of the $(k+1)$-th time interval,
the index $\invind$ is replaced
with a new index on the vectors in the $(k+1)$-th time interval.

\minibatchx guarantees that all pairs of vectors
whose time difference is smaller than {\hor}
are tested for similarity.
Thus, due to the time-filtering property,
it returns the complete set of similar pairs.
Algorithm~\ref{algorithm:minibatchx} shows pseudocode for \minibatchx.

One drawback of \minibatchx is that
it reports some similar pairs with a delay.
In particular,
all similar pairs that span across two time intervals
are reported after the end of the first interval.
In applications that require to report similar pairs as soon as both vectors are present,
this behavior is undesirable.
Moreover, to guarantee correctness, \minibatchx needs to tests pairs of vectors
whose time difference is as large as $2\hor$, which can be pruned only
after they are reported by the indexing scheme \x, thus wasting computational power.

\section{Filtering framework}
\label{section:framework}

We now review the main indexing schemes
used for the \apss problem.
For each indexing scheme \x,
we describe its three phases 
(\ic, \cg, \cv),
and discuss how
to adapt it to the streaming setting,
giving the {\streamingx} algorithm.
One of our main contributions is the adaptation of well-known algorithms 
for the batch case to the streaming setting.
For making the paper self-contained, 
in each of the following sub-sections
we first present the indexing schemes in the static case (state of the art)
and how they are used in the \minibatch framework, 
and then we describe their adaptation to the \streaming framework (contribution of this paper).
Furthermore, 
section~\ref{section:framework:pure-l2ap} describes our improved \elltwo-based indexing scheme.

For all the indexing schemes we present, except {\ii}, we provide pseudocode.
Due to lack of space, 
and also to highlight the differences among the indexing schemes, 
we present our pseudocode using a color convention:
\noindent
$-$ {\ltap} index: all lines are included (black, \txtred{red}, and \txtgreen{green}).

\noindent
$-$ {\ap} index:
\txtred{red} lines are included; \txtgreen{green} lines excluded.

\noindent
$-$ {\pure} index:
\txtgreen{green} lines are included; \txtred{red} lines excluded.


\subsection{Inverted index}
\label{section:framework:inverted-index}


The first scheme 
is a simple \emph{inverted index} ({\ii})
with no index-pruning optimizations.
As with all indexing schemes,
the main observation is that for two vectors to be similar,
they need to have at least one common coordinate.
Thus, two similar vectors can be found together in some list~$\ind_j$.

In the {\ic} phase,
for each new vector~{\vecx},
all its coordinates $\xcoord_j$ are added to the index.
In the \cg phase,
given the query vector {\vecx},
we use the index {\invind} to retrieve candidate vectors similar to {\vecx}.
In particular, the candidates {\vecy}
are all vectors in the posting lists
where {\vecx} has non-zero coordinates.

The result of the \cg phase is the exact similarity score
between \vecx and each candidate vector \vecy.
Therefore, the \cv phase
simply applies the similarity threshold \thr
and reports the true similar pairs.

\spara{{\minibatch} framework (\minibatch-\ii):}
The functions
{\construct-\ii}, {\generate-\ii}, and {\verify-\ii},
needed to specify the {\minibatch-\ii} algorithm
follow directly from the discussion above.
Since they are rather straightforward,
we omit further details.

\spara{{\streaming} framework (\streaming-\ii):}
The description of {\ii} given above considers
the {\apss} problem, i.e., a static dataset \data.
We now consider a data stream {\stream},
where each vector {\vecx} in the stream is associated with a timestamp $\timestamp{\vecx}$.
We apply the {\ii} scheme
by adding the data items in the index in the order they appear in the stream.
The lists $\ind_j$ of the index keep pairs
$(\pointer{\vecx},\xcoord_j)$
ordered by timestamps $\timestamp{\vecx}$.
Maintaining a time-respecting order inside the lists is easy.
We process the data in the same order,
so we can can just append new vector coordinates at the end of the lists.

Before adding a new item {\vecx} to the index,
we use the index to retrieve all the earlier vectors {\vecy}
that are \dt-similar to {\vecx}.
Due to the time-filtering property,
vectors that are older than $\hor$ cannot be similar to {\vecx}.
This observation has two implications:
($i$)
we can stop retrieving candidate vectors from a list $\ind_j$
upon encountering a vector that is older than $\hor$;
and
($ii$)
when encountering such a vector,
the part of the list preceding it can be pruned away.


\begin{algorithm}[t]
\caption{\construct-\txtgreen{\pure}\txtred{\ap}}
\label{algorithm:ic-unified}
\Input{Dataset \data, threshold \thr}
\Output{Similarity index {\invind} on {\data}, and \\
set of pairs $\pairs=\{(\vecx,\vecy) \mid \simop(\vecx,\vecy)\ge\thr\}$}

$\invind \define \emptyset$\;
$\pairs \define \emptyset$\;
\ForEach{$\vecx\in\data$}{
	$\candidates \define \text{\generate-\txtgreen{\pure}\txtred{\ap}}(\invind,\vecx,\thr)$\;
	$\pairs \define \pairs \cup \text{\verify-\txtgreen{\pure}\txtred{\ap}}(\invind,\vecx,\candidates,\thr)$\;
	\txtred{$\bs_1 \define 0$}\;
	\txtgreen{$\bs_2 \define 0;  \; \bs_t \define 0$}\;
	\ForEach{$j=1\ldots\dimension \text{ \emph{s.t.} } \xcoord_j > 0$} {
		$\pscore \define \min\{\txtred{\bs_1}, \txtgreen{\bs_2}\}$\;
		\txtred{$\bs_1 \define \bs_1 + \xcoord_j \, \min\{\cm_j,\vm_{\vecx}\}$} \label{line:ic-unified:apbound}\;
		\txtgreen{$\bs_t \define \bs_t + {\xcoord_j}^{2};  \; \bs_2 \define \sqrt{\bs_t}$}\;
		\If{$\min\{\txtred{\bs_1}, \txtgreen{\bs_2}\} \ge\thr$} {
			\If{$\resind[\pointer{\vecx}] = \emptyset$} {
				$\resind[\pointer{\vecx}] \define\prefix{\vecx}_j$\;
				\txtgreen{$\parray[\pointer{\vecx}] \define \pscore$} \label{line:ic-unified:ps}\;
			}
			$\ind_j \define \ind_j\cup \{(\pointer{\vecx},\xcoord_j,||\prefix{\vecx}_j||)\}$\;
		}
	}
}
\Return $(\invind,\pairs)$\;
\end{algorithm}

\begin{algorithm}[t]
\caption{\generate-\txtgreen{\pure}\txtred{\ap}}
\label{algorithm:cg-unified}
\Input{Index \invind, vector \vecx, threshold \thr}
\Output{
Accum.\ score array $\candidates$ for candidate vectors \\
(candidate set is $\candidates=\{ \vecy\in\data \mid \candidates[\pointer{\vecy}]>0\})$
}
$\candidates \define \emptyset$\;
$\sz_1 \define \thr / \vm_{\vecx}$\;
\txtred{$\rs_1 \define \dotop(\vecx, \wvecy)$}\;
\txtgreen{$\rs_2 \define 1;  \; \rs_t \define 1$}\;
\ForEach(\CommentSty{//reverse order}){$j=\dimension \ldots 1 \text{ \emph{s.t.} } \xcoord_j > 0$} {
 	\ForEach{$(\pointer{\vecy}, \ycoord_j, ||\prefix{\vecy}_j||) \in \ind_j$}{
		$\rscore \define \min\{\txtred{\rs_1}, \txtgreen{\rs_2}\}$\;
 		\If{$|\vecy| \, \vm_{\vecy} \ge \sz_1$ \label{line:cg-unified:lowerbound}} {
 			\If{$\candidates[\pointer{\vecy}]>0 \text{ \em or } \rscore \ge\thr$ \label{line:cg-unified:upperbound}} {
				$\candidates[\pointer{\vecy}] \define \candidates[\pointer{\vecy}] + \xcoord_j \, \ycoord_j$\;
				\txtgreen{
					$\ltbound \define \candidates[\pointer{\vecy}] + ||\prefix{\vecx}_j|| \, ||\prefix{\vecy}_j||$\;
					\If{$\ltbound < \thr$\label{line:cg-unified:earlypruning}} {
						$\candidates[\pointer{\vecy}] \define 0$\;
					}
				}
 			}
 		}
 	}
 	\txtred{$\rs_1 \define \rs_1 - \xcoord_j \, \wycoord_j$}\;
	\txtgreen{$\rs_t \define \rs_t - {\xcoord_j}^{2};  \; \rs_2 \define \sqrt{\rs_t}$} \;
\label{line:cg-unified:end}
}
\Return $\candidates$;
\end{algorithm}

\begin{algorithm}[t]
\caption{\verify-\txtgreen{\pure}\txtred{\ap}}
\label{algorithm:cv-unified}
\Input{Index \invind, vector \vecx, candidate vector array $\candidates$, threshold \thr}
\Output{Set of pairs $\pairs=\{(\vecx,\vecy) \mid \simop(\vecx,\vecy)\ge\thr\}$}
$\pairs \define \emptyset$\;
\ForEach{$\vecy \text{ \emph{s.t.} } \candidates[\pointer{\vecy}] > 0$\label{line:cv-unified:start}}{
	\txtgreen{$\ps_1 \define \candidates[\pointer{\vecy}] + \parray[\pointer{\vecy}]$}\;
	$\ds_1 \define \candidates[\pointer{\vecy}] + \min\{\vm_{\vecx} \, {\Sigma}_{\prefix{\vecy}}, \vm_{\prefix{\vecy}} \, {\Sigma}_{\vecx} \}$\;
	$\sz_2 \define \candidates[\pointer{\vecy}] + \min\{|\vecx|,|\prefix{\vecy}|\} \, \vm_{\vecx} \, \vm_{\prefix{\vecy}}$\;
	\If{$\txtgreen{\ps_1 \ge \thr} \Andkw \ds_1 \ge \thr \Andkw \sz_2 \ge \thr)$\label{line:cv-unified:dpscorebound}} {
		$s \define \candidates[\pointer{\vecy}] + \dotop(\vecx, \prefix{\vecy})$\;
		\If{$(s\ge \thr)$} {
			$\pairs \define \pairs \cup \{(\vecx,\vecy,s)\}$\;
		}
	}
\label{line:cv-unified:end}}
\Return \pairs\;
\end{algorithm}

\subsection{All-pairs indexing scheme}
\label{section:framework:all-pairs}

The \ap~\citep{bayardo2007scaling-up} scheme
improves over the simple {\ii} method
by reducing the size of the index.
When using \ap, not all the coordinates of a vector {\vecx} need to be indexed,
as long as it can be guaranteed that
{\vecx} and all its similar vectors {\vecy}
share at least one common coordinate in the index {\invind}.



Similarly to the {\ii} scheme,
{\ap} incrementally builds an index by processing one vector at a time.
In the \ic phase
(function \construct-\ap, shown as Algorithm~\ref{algorithm:ic-unified},
including \txtred{red} lines and excluding \txtgreen{green}),
for each new vector {\vecx},
the algorithm scans its coordinates in a predefined order.
It keeps a score {\pscore},
which represents an upper bound on the similarity between a prefix of {\vecx}
and any other vector in the dataset.
To compute the upper bound, \ap uses the vector {\cmvec},
which keeps the maximum of each coordinate in the dataset.
As long as \pscore is smaller than the threshold {\thr},
given that the similarity of {\vecx} to any other vector cannot exceed {\thr},
the coordinates of {\vecx} scanned so far can be omitted from the index
without the danger of missing any similar pair.
As soon as {\pscore} exceeds {\thr},
the remaining suffix of {\vecx} is added to the index,
and the prefix \prefix{\vecx} is saved in the residual direct index~\resind.


In the \cg phase
(function \generate-\ap, shown as Algorithm~\ref{algorithm:cg-unified},
including \txtred{red} lines and excluding \txtgreen{green}),
\ap uses a lower bound $\sz_1$ for the size of any vector {\vecy}
that is similar to {\vecx},
so that vectors that have too few non-zero entries can be ignored
(line~\ref{line:cg-unified:lowerbound}).
Additionally, it uses a variable $\rs_1$
to keep an upper bound on the similarity
between {\vecx} and any other vector {\vecy}.
This upper bound is computed by using the residual direct index \resind
and the already accumulated dot-product,
and is updated as the algorithm processes the different posting lists.
When the upper bound becomes smaller than {\thr},
vectors that have not already been added to the set of candidates
can be ignored (line~\ref{line:cg-unified:upperbound}).
The array $\candidates$ holds the candidate vectors,
together with the partial dot-product that is due to the indexed part of the vectors.

Finally, in the \cv phase
(\verify-\ap\ function, shown as Algorithm~\ref{algorithm:cv-unified}),
we compute the final similarities by using the residual index \resind,
and report the true similar~pairs.

The streaming versions of \ap,
in both \minibatch and \streaming frameworks,
are not efficient in practice,
and thus, we omit further details.
Instead, we proceed presenting the next scheme,
the \ltap index, which is a generalization of \ap.


\subsection{L2-based indexing scheme}
\label{section:framework:l2ap}

\ltap~\citep{anastasiu2014l2ap}
is the state-of-the-art for computing similarity self-join.
The scheme uses tighter \elltwo-bounds,
which reduce the size of the index,
but also the number of generated candidates
and the number of fully-computed similarities.

The {\ltap} scheme primarily leverages the
Cauchy-Schwarz inequality, which states that
$\dotop(\vecx, \vecy) \leq ||\vecx|| \, ||\vecy||$.
The same bound applies when considering a prefix of a query vector~\vecx.
Since we consider unit-normalized vectors ($||\vecy||=1$),
\[
\dotop(\prefix{\vecx}, \vecy) \leq ||\prefix{\vecx}|| \, ||\vecy|| \leq ||\prefix{\vecx}||.
\]

The previous bound produces a tighter value for {\pscore},
which is used in the \ic phase
to bound the similarity of the vector currently being indexed,
to the rest of the dataset.
In particular,  {\ltap} sets
$\pscore = \min\{\pscore_{\ap}, ||\prefix{\vecx}||\}$.

Additionally,
the {\ltap} index stores the value of {\pscore} computed
when {\vecx} is indexed.
\ltap keeps these values in an array {\parray}, index by $\pointer{\vecx}$,
as shown in line~\ref{line:ic-unified:ps} of Algorithm~\ref{algorithm:ic-unified}.
The {\ltap} index also
stores in the index the magnitude of the prefix of each vector \vecx, up to coordinate $j$.
That is,
the entries of the posting lists are now triples of the type
$(\pointer{\vecx}, \xcoord_j, ||\prefix{\vecx}_j||)$.

Both pieces of additional information,
the array {\parray} and the values $||\vecx_j||$ stored in the posting lists,
are used during the {\cg} phase
to reduce the number of candidates.

In the \cg phase, for a given query vector {\vecx},
we scan its coordinates backwards,
i.e., in reverse order with respect to the one used during indexing,
and we accumulate similarity scores for suffixes of \vecx.
We keep a \rscore bound on the remaining similarity score,
which combines the $\rs_1$ bound used in {\ap}
and a new \elltwo-based $\rs_2$ bound that uses the prefix magnitude values
($||\prefix{\vecy}_j||$)
stored in the posting lists.
The \rscore bound is an upper bound on the similarity
of the prefix of the current query vector and any other vector in the index,
thus as long as \rscore is smaller than \thr, the algorithm can prune the candidate.

Pseudocode for the three phases of {\ltap}, \ic, \cg, and \cv,
is shown in Algorithms~\ref{algorithm:ic-unified},
\ref{algorithm:cg-unified}, and~\ref{algorithm:cv-unified}, respectively,
including both \txtred{red} and \txtgreen{green} lines.
More details on the scheme
can be found in the original paper of~\citet{anastasiu2014l2ap}.

\spara{{\minibatch} framework (\minibatch-\ltap):}
As before, the {\minibatch-\ltap} algorithm
is a direct instantiation of the generic
Algorithm~\ref{algorithm:minibatchx},
using the functions that implement the three different phases of {\ltap}.

\spara{{\streaming} framework (\streaming-\ltap):}
To describe the modifications required for adapting
the {\ltap} scheme in the streaming framework,
we need to introduce some additional notation.

First, we assume that the input is a stream {\stream} of vectors.
The main difference is that now both \cmvec and \wvecy are function of time,
as the maximum values in the stream evolve over time.
We adapt the use of the two vectors to the streaming case differently.

The vector \wvecy is used in the \cg phase to generate candidate vectors.
Given that we are looking at vectors that are already indexed (i.e., in the \emph{past}),
it is possible to apply the definition of \simt between the query vector \vecx and \wvecy.
In particular, given that the coordinates of \wvecy originate from different vectors in the index,
we can apply the decay factor to each coordinate to \wvecy separately.

More formally, let $\wvecydecay(t)$ be the \emph{worst case indexed vector} at time $t$,
that is, $\wvecydecay(t)$ is the representation of the vector in the index
that is most \dt-similar to any vector in \stream arriving at time $t$.
Its $j$-th coordinate $\wycoorddecay_j(t)$ 
is given by
\[
\wycoorddecay_j(t) =
\max_{\substack{\vecx\in\stream \\ \timestamp{\vecx}\le t}} \left\{ \xcoord_j \, e^{- \decay\abs{t-\timestamp{\vecx}}} \right\}.
\]

To simplify the notation,
we omit the dependency from time when obvious from the context,
and write simply~\wvecydecay.

An upper bound on the time-dependent
cosine similarity of any newly arrived vector {\vecx} at time $t$ can be obtained
by
\[
\rscore(t)
= \dotop(\vecx, \wvecydecay)
= \sum_{j=1}^{\dimension} \xcoord_j \, \wycoorddecay_j .
\]

Conversely, the vector \cmvec is used in the \ic phase to decide which coordinates to add to the index.
Its purpose is to guarantee that any two similar vectors in the stream share at least a coordinate.
As such, it needs to look at the \emph{future}.
In a static setting, and in \minibatch, maximum values in the whole dataset
can easily be accumulated beforehand.
However, in a streaming setting these maximum values need to be kept online.
This difference implies that when the vector \cmvec changes,
the invariant on which the \ap index is built is lost, and we need to restore it.
We call this process \emph{re-indexing}. 

At a first glance, it might seem straightforward to use \simt to decide what to index,
i.e., adding the decay factor to line~\ref{line:ic-unified:apbound} in Algorithm~\ref{algorithm:ic-unified}.
However, the decay factor causes \cmvec to change rapidly,
which in turn leads to a larger number of re-indexings.
Re-indexing, as explained next, is an expensive operation.
Therefore, we opt to avoid applying the time decay when computing the $\bs_1$ bound with \cmvec.


\begin{algorithm}[t]
\caption{\streaming-\x (Streaming with index \x)}
\label{algorithm:streamingx}
\Input{Data stream \stream, threshold \thr, decay \decay}
\Output{All pairs $\vecx,\vecy \in \stream$ s.t.\ $\simt(\vecx,\vecy)\ge\thr$}
$\invind \define \emptyset$\;
$\pairs \define \emptyset$\;
\While{$\text{\emph{true}}$} {
	\vecx \define \texttt{read}$(\stream)$\;
	$(\invind, \pairs) \define \text{\constructx-\streaming}(\invind, \vecx,\thr,\decay)$\;
	\Report{\pairs}\;
}
\end{algorithm}

\spara{Re-indexing (\ic).}
For each coordinate $\xcoord_j$ of a newly arrived vector \vecx,
one of three cases may occurr.
\begin{squishlist}
\item  [$-$]
$\xcoord_j = 0$: no action is needed.
\item  [$-$]
$0 < \xcoord_j \leq \cm_j$:
the posting list $\ind_j$ can be pruned via time filtering, as explained next.
\item  [$-$]
$\xcoord_j > \cm_j$:
the upper bound vector {\cmvec} needs to be updated with the new value just read,
i.e., $\cm_j\define\xcoord_j$.
In this case, parts of the pruned vectors may need to be re-indexed.
\end{squishlist}

When \cmvec is updated ($\xcoord_j > \cm_j$)
the invariant of prefix filtering does not hold anymore.
That is, there may be vectors similar to {\vecx} that are not matched in \invind.
To restore the invariant, we re-scan \resind
(un-indexed part of vectors within horizon {\hor}).
If the similarity of a vector $\vecy \in \resind$ and \cmvec has increased,
we will reach the threshold \thr while scanning the prefix $\prefix{\vecy}_p$
(before reaching its end).
Therefore, we just need to index those coordinates $\ycoord_{p'} < \ycoord_{j} \leq \ycoord_{p}$,
where $p'$ is the newly computed boundary.

Given that the similarity can increase only for vectors
that have non-zero value in the dimensions of {\cmvec} that got updated,
we can keep an inverted index of {\resind} to avoid scanning every vector.
We use the updated dimensions of the max vector $\cm_j$
to select the possible candidates for re-indexing,
and then scan only those from the residual index \resind.

Note that re-indexing inserts older vectors in the index.
As the lists always append items at the end,
re-indexing introduces out-of-order items in the list.
As explained in Section~\ref{section:implementation},
the loss of the time-ordered property hinders one of
the optimizations in pruning the index based on time filtering (explained next).

%

\spara{Time filtering (\cg).}
To avoid continuously scanning the index, we adopt a lazy approach.
The posting lists $\ind_j$ are (partially) sorted by time,
i.e., newest items are always appended to the tail of the lists.
Thus, a simple linear scan from the head of the lists
is able to prune the lists lazily
while accumulating the similarity in the \cg phase.
More specifically, when a new vector \vecx arrives,
we scan all the posting lists $I_j$ such that $\xcoord_j \neq 0$,
in order to generate its candidates.
While scanning the lists, we drop any item $(\pointer{\vecy}, \ycoord_j, ||\prefix{\vecy}_j||)$
such that $\abs{t(\vecx) - t(\vecy)} > \hor$.

%
%

The main loop of {\streaming-\ltap} is shown
in Algorithm~\ref{algorithm:streamingx}.
The algorithm simply reads each item {\vecx} in the stream,
adds  {\vecx} to the index, and computes the vectors that are similar to it,
by calling \construct-\ltap-\streaming
(Algorithm~\ref{algorithm:ic-unified-str},
including both \txtred{red} and \txtgreen{green} lines).

\begin{algorithm}[t]
\caption{\construct-\txtgreen{\pure}\txtred{\ap}-\streaming}
\label{algorithm:ic-unified-str}
\Input{Index \invind, vector \vecx, threshold \thr, decay \decay}
\Output{Updated similarity index \invind including \vecx, and
set of pairs $\pairs=\{(\vecx,\vecy) \mid \simt(\vecx,\vecy)\ge\thr\}$}
$\bs_1 \define +\infty; \; \bs_2 \define +\infty$\;
\txtred{$\bs_1 \define 0$}\;
\txtgreen{$\bs_2 \define 0; \bs_t \define 0$}\;
$\candidates \define \text{\generate-\txtgreen{\pure}\txtred{\ap}-\streaming}(\invind,\vecx,\thr,\decay)$\;
$\pairs \define \text{\verify-\txtgreen{\pure}\txtred{\ap}-\streaming}((\invind,\vecx,\candidates,\thr,\decay)$\;
\ForEach{$j=1\ldots\dimension \text{ \emph{s.t.} } \xcoord_j > 0$} {
	$\pscore \define \min\{\txtred{\bs_1}, \txtgreen{\bs_2}\}$\;
	\txtred{$\bs_1 \define \bs_1 + \xcoord_j \, \min\{\cm_j,\vm_{\vecx}\}$}\;
	\txtgreen{$\bs_t \define \bs_t + {\xcoord_j}^{2}; \bs_2 \define \sqrt{\bs_t}$}\;
	\If{$\min\{\txtred{\bs_1}, \txtgreen{\bs_2}\} \ge\thr$} {
		\If{$\resind[\pointer{\vecx}] = \emptyset$} {
			$\resind[\pointer{\vecx}] \define\prefix{\vecx}_j$\;
			\txtgreen{$\parray[\pointer{\vecx}] \define \pscore$} \label{line:ic-unified-str:ps}\;
		}
		$\ind_j \define \ind_j\cup \{(\pointer{\vecx}, \xcoord_j,||\prefix{\vecx}_j||)\}$\;
	}
}
\Return $(\invind, \pairs)$\;
\end{algorithm}

\begin{algorithm}[t]
\caption{\generate-\txtgreen{\pure}\txtred{\ap}-\streaming}
\label{algorithm:cg-unified-str}
\Input{Index \invind, vector \vecx, threshold \thr, decay \decay}
\Output{
Accum.\ score array $\candidates$ for candidate vectors \\
(candidate set is $\candidates=\{ \vecy\in\data \mid \candidates[\pointer{\vecy}]>0\})$
}
$\candidates \define \emptyset$\;
$\rs_1 \define +\infty; \; \rs_2 \define +\infty$\;
\txtred{$\rs_1 \define \dotop(\vecx, \wvecydecay)$}\label{line:cg-unified-str:start}\;
\txtgreen{$\rs_2 \define 1; \rs_t \define 1$}\;
\ForEach(\CommentSty{//reverse order})
	{$j=\dimension \ldots 1 \text{ \emph{s.t.} } \xcoord_j > 0$} {
 	\ForEach{$(\pointer{\vecy}, \ycoord_j, ||\prefix{\vecy}_j||) \in \ind_j \text{ \emph{s.t.} }
 		\dt_{\vecx\vecy} \leq \hor$ } {
		$\rscore \define \min\{ \txtred{\rs_1}, \txtgreen{\rs_2 \, \ff{\vecx\vecy}} \}$\label{line:cg-unified-str:upperbound}\;
		\If{$(\candidates[\pointer{\vecy}]>0 \text{ \em or } \rscore \ge\thr)$} {
			$\candidates[\pointer{\vecy}] \define \candidates[\pointer{\vecy}] + \xcoord_j \, \ycoord_j$\;
			\txtgreen{
				$\ltbound \define \candidates[\pointer{\vecy}] + ||\prefix{\vecx}_j|| \, ||\prefix{\vecy}_j|| \, \ff{\vecx\vecy}$\label{line:cg-unified-str:earlypruning}\;
				\If{$\ltbound < \thr$} {
					$\candidates[\pointer{\vecy}] \define 0$\;
				}
			}
		}
	}
 	\txtred{$\rs_1 \define \rs_1 - \xcoord_j \, \wycoorddecay_j$}\label{line:cg-unified-str:maxdecay}\;
	\txtgreen{$\rs_t \define \rs_t - {\xcoord_j}^{2}; \rs_2 \define \sqrt{\rs_t}$} \;
\label{line:cg-unified-str:end}
}
\Return $\candidates$;
\end{algorithm}

\begin{algorithm}[t]
\caption{\verify-\txtgreen{\pure}\txtred{\ap}-\streaming}
\label{algorithm:cv-unified-str}
\Input{Index \invind, vector \vecx,
candidate vector array $\candidates$, threshold \thr, decay \decay}
\Output{Set of pairs $\pairs=\{(\vecx,\vecy) \mid \simt(\vecx,\vecy)\ge\thr\}$}
$\pairs \define \emptyset$\;
\ForEach{$\vecy \text{ \emph{s.t.} } \candidates[\pointer{\vecy}] > 0$}{
	\txtgreen{$\ps_1 \define (\candidates[\pointer{\vecy}] + \parray[\pointer{\vecy}]) \, \ff{\vecx\vecy}$}\; \label{line:cv-unified-str:boundstart}
	$\ds_1 \define (\candidates[\pointer{\vecy}] + \min\{\vm_{\vecx} \, {\Sigma}_{\prefix{\vecy}}, \vm_{\prefix{\vecy}} \, {\Sigma}_{\vecx} \}) \, \ff{\vecx\vecy}$\;
	$\sz_2 \define (\candidates[\pointer{\vecy}] + \min\{|\vecx|,|\prefix{\vecy}|\} \, \vm_{\vecx} \, \vm_{\prefix{\vecy}}) \, \ff{\vecx\vecy}$\;\label{line:cv-unified-str:boundend}
	\If{$\txtgreen{\ps_1 \ge \thr} \Andkw \ds_1 \ge \thr \Andkw \sz_2 \ge \thr)$} {
		$s \define \candidates[\pointer{\vecy}] + \dotop(\vecx, \prefix{\vecy})$\;
		\If{$s\ge \thr$} {
			$\pairs \define \pairs \cup \{(\vecx,\vecy,s)\}$\;
		}
	}
\label{line:cv-unified-str:end}}
\Return \pairs\;
\end{algorithm}

\subsection{Improved L2-based indexing scheme}
\label{section:framework:pure-l2ap}

The last indexing scheme we present, {\pure},
is an adaptation of {\ltap},
optimized for stream data.

The idea for the improved index
is based on the observation that {\ltap}
combines a number of different bounds:
the bounds inherited from the {\ap} scheme
(e.g., $\bs_1$ for index construction, and $\rs_1$ for candidate generation)
and the new {\elltwo}-based bounds
introduced by~\citet{anastasiu2014l2ap}
(e.g., $\bs_2$ for index construction, and $\rs_2$ and \ltbound for candidate generation).

We have observed that in general
the {\elltwo}-based bounds are more effective than the {\ap}-based bounds.
In almost all cases the {\elltwo}-based bounds are the ones that trigger.
This observation is also verified by the results of~\citet{anastasiu2014l2ap}.
Furthermore, as can be easily seen
(by inspecting
the red and green lines of Algorithms~\ref{algorithm:ic-unified},
\ref{algorithm:cg-unified}, and~\ref{algorithm:cv-unified}),
while the {\ap} bounds use statistics of the data in the index,
the  {\elltwo}-based bounds depend only on the vector being index.
This implies that by using only the {\elltwo}-based bounds,
one does not need to maintain the worst case vector $\cmvec(t)$,
and thus, no re-indexing is required.

Thus, the {\pure} index uses only the
{\elltwo}-based bounds and discards the {\ap}-based bounds.
The static version of {\pure} (used in {\minibatch})
is shown at Algorithms~\ref{algorithm:ic-unified},
\ref{algorithm:cg-unified}, and~\ref{algorithm:cv-unified},
including the \txtgreen{green} lines and excluding the \txtred{red} ones.
The main loop for the streaming case is \construct-\pure-\streaming,
shown as Algorithm~\ref{algorithm:ic-unified-str},
including the \txtgreen{green} lines and excluding the \txtred{red} ones.

\section{Implementation}
\label{section:implementation}

This section presents additional details
and optimizations for both of our algorithmic frameworks,
{\minibatch} and {\streaming}.

\subsection{Minibatch}

Recall that in the {\minibatch} framework,
we construct the index for a set of data items
that arrive within horizon~\hor.
However,
the \ltap scheme
requires that we know not only the data used to construct the index,
but also the data used to query the index,
e.g., see Algorithm~\ref{algorithm:ic-unified}, line~\ref{line:ic-unified:apbound}
This assumption does not hold for {\minibatch},
as shown in Algorithm~\ref{algorithm:minibatchx},
as the data used to query the index arrive in the stream after
the index has been constructed.

To address this problem,
we modify the {\minibatch} framework as follows.
At any point of time we consider two windows,
$\window_{k-1}$ and $\window_{k}$,
each of size~{\hor},
where
$\window_{k}$ is the current window and $\window_{k-1}$ is the previous one.
The input data are accumulated in the current window $\window_{k}$,
and the vector {\cmvec} is maintained.
When we arrive at the end of the current window,
we compute the global vector {\cmvec}, defined over both $\window_{k-1}$ and $\window_{k}$,
by combining the {\cmvec} vectors of the two windows.
We then use the data in $\window_{k-1}$ to build the index,
and the data in $\window_{k}$ to query the index built over $\window_{k-1}$.

When moving to the next window,
$\window_{k+1}$ becomes current,
$\window_{k}$ becomes previous,
and $\window_{k-1}$ is dropped.

\subsection{Streaming}

Next we discuss implementation issues related to the streaming framework.

\spara{Variable-size lists.}
The streaming framework introduces posting lists of variable size.
The size of the posting lists not only increases,
but due to time filtering it may also decrease.
In order to avoid many and small memory (de)al\-lo\-ca\-tions,
we implement posting lists using a circular byte buffer.
When the buffer becomes full we double its capacity,
while when its size drops below 1/4 we halve it.

\spara{Time filtering.}
In the {\ii} and {\pure} schemes
it is easy to maintain the posting lists in time-increasing order.
This ordering introduces a simple optimization:
by scanning the posting lists backwards
--- from the newest to the oldest item ---
during candidate generation,
it is possible to stop scanning and truncate the posting list as soon as we find the first item over the horizon~\hor.
Truncating the circular buffer requires constant time (if no shrinking occurs).

In {\ltap} it is not possible to keep the posting lists in time order,
as re-indexing may introduce out-of-order items.
Thus, \ltap scans the lists forward and needs to go through all the expired items to prune the posting list.

\spara{Data structures.}
The residual direct index \resind
and the \parray array in \ltap and \pure
need to be continuously pruned.
To support the required operations for these data strunctures,
we implement them using a \emph{linked hash-map},
which combines a hash-map for fast retrieval, and
a linked list for sequential access.
The sequential access is the order in which the data items are inserted in the data structure,
which is also the time order.
Maintaining these data structures requires amortized constant time,
and memory linear in the number of vectors that arrive within a time interval \hor.

\spara{Applying decay-factor pruning ({\decay}).}
Central to adapting the indexing methods in the streaming setting,
is the use of the decay factor \decay for pruning.
In order to make decay-factor pruning effective
we apply the following principles.

\begin{squishlist}
\item 
[Index construction:]
As the decay factor is used to prune candidates from the data seen in the past,
decay-factor pruning is \emph{never} applied during the index-construction phase.

\item 
[Candidate generation:]
Decay-factor pruning is applied during candidate generation
when computing the \rscore bound
(line~\ref{line:cg-unified-str:upperbound} of Algorithm~\ref{algorithm:cg-unified-str}),
as well as when applying the early \elltwo pruning (line~\ref{line:cg-unified-str:earlypruning}).

For \ltap, it is also applied when computing $\rs_1$ (line~\ref{line:cg-unified-str:maxdecay}),
with a different decay factor for each coordinate
(as specified in the definition of \wvecydecay).

\item 
[Candidate verification:]
In the candidate-verification phase,
decay-factor pruning is easily applied when computing all of the bounds
(lines~\ref{line:cv-unified-str:boundstart}--\ref{line:cv-unified-str:boundend}
of Algorithm~\ref{algorithm:cv-unified-str}).
\end{squishlist}

As a final remark,
we note that the {\elltwo} bounds depend only on the
candidate and query vectors,
not on the data seen earlier in the stream.
This property makes \pure
very suitable for the streaming setting.

\section{Experimental evaluation}
\label{section:experiments}

The methods presented in the previous sections
combine four indexing schemes into two algorithmic frameworks (minibatch vs.\ streaming).
Altogether, they lead to a large number of trade-offs for
increased pruning power vs.\ index maintainance.
In order to better understand the behavior and evaluate the efficiency of each method
we perform an extensive experimental study.
The objective of our evaluation is to experimentally answer
the following questions:

\begin{squishlist}
\item [\textbf{Q1:}] Which framework performs better, \streaming or \minibatch?
\item [\textbf{Q2:}] How effective is \pure, compared to \ltap and \ii?
\item [\textbf{Q3:}] What are the effects of the parameters \decay and \thr?
\end{squishlist}

\spara{Datasets.}
We test the proposed algorithms on several real-world datasets.
\rcvo is the Reuters Corpus volume~1 dataset of newswires~\citep{lewis2004rcv1},
\webspam is a corpus of spam web pages~\citep{wang2012evolutionary},
\blogs is a collection of one month of WordPress blog posts crawled in June 2015, and
\tweets is a sample of tweets collected in June 2009~\citep{yang2011patterns}.
All datasets are available online in text format,\footnote{All datasets will be available at time of publication.}
while for the experiments we use a more compact and faster-to-read binary format;
the text-to-binary converter is also included in the source code we provide.
These datasets exhibit a wide variety in their characteristics,
as summarized in Table~\ref{tab:datasets}, and
allow us to evaluate our methods under very different scenarios.
Specifically, we see that the density of the four datasets varies greatly.
With respect to timestamps,
items in \webspam and \rcvo are assigned an artificial timestamp,
sampled from a Poisson and a sequential arrival process, respectively.
For \blogs and \tweets the publication time of each item is available,
and we use it as a timestamp.



\begin{table*}[t]
\caption{Datasets used in the experimental evaluation.
$n$:\ number of vectors;
$m$:\ number of coordinates;
$\sum{|\vecx|}$:\ number of non-zero coordinates;
$\rho = \nicefrac{\sum{|\vecx|}}{n m}$:\ density;
$\overline{|\vecx|} = \nicefrac{\sum{|\vecx|}}{n}$:\ average number of non-zero coordinates; 
and type of timestamps.}
\begin{center}
\small
\begin{tabular}{l r r r S[table-format=1.3] r r}
\toprule
\multicolumn{1}{c}{Dataset}	&	\multicolumn{1}{c}{$n$}	&	\multicolumn{1}{c}{$m$}	&	\multicolumn{1}{c}{$\sum{|\vecx|}$}	&	\multicolumn{1}{c}{$\rho$ (\%)}	&	\multicolumn{1}{c}{$\overline{|\vecx|}$}	&	\multicolumn{1}{c}{Timestamps} \\
\midrule
\webspam	&	\num{350000}		&	\num{680715}	&	\num{1305} M	&	\num{0.55}	&	\num{3728.00}	&	poisson \\ 
\rcvo	&	\num{804414}		&	\num{43001}	&	\num{61} M	&	\num{0.18}	&	\num{75.72}	&	sequential \\ 
\blogs	&	\num{2532437}		&	\num{356043}	&	\num{356} M	&	\num{0.04}	&	\num{140.40}	&	publishing date \\ 
\tweets	&	\num{18266589}	&	\num{1048576}	&	\num{173} M	&	\num{0.001}	&	\num{9.46}	&	publishing date \\ 
\bottomrule
\end{tabular}
\end{center}
\label{tab:datasets}
\vspace{-\baselineskip}
\end{table*}%

\begin{table}[t]
\caption{Fraction of the $24$ configurations of (\thr,\decay) that successfully terminate within the allowed time budget (closer to $1.00$ is better).}
\vspace{-\baselineskip}
\begin{center}
\small
\begin{tabular}{l ccc ccc}
\toprule
\multirow{2}{*}{Dataset}	&	\multicolumn{3}{c}{\minibatch}	&	\multicolumn{3}{c}{\streaming} \\
\cmidrule(lr){2-4} \cmidrule(lr){5-7}
&	\ii	&	\ltap	&	\pure		&	\ii	&	\ltap	&	\pure \\
\midrule
\webspam	&1.00&1.00&1.00&1.00&0.83&0.96 \\
\rcvo		&1.00&1.00&1.00&1.00&0.96&1.00 \\
\blogs	&0.25&0.25&0.25&1.00&0.96&1.00 \\
\tweets	&0.25&0.25&0.25&1.00&0.96&0.96 \\
\bottomrule
\end{tabular}
\end{center}
\label{tab:summary}
\end{table}%

\spara{Algorithms.}
We test the two algorithmic frameworks,
\streaming and \minibatch, with the three index variants,
\ii, \ltap, and \pure, on all four datasets.
For the similarity threshold \thr
we explore a range of values in $[0.5, 0.99]$,
which is typical for the \apss problem~\citep{anastasiu2014l2ap,bayardo2007scaling-up},
while for the time-decay factor \decay we use exponentially increasing values in the range
$[10^{-4}, 10^{-1}]$.

Our code also includes an implementation of \ap for \minibatch,
but in preliminary experiments we found it much slower than \ltap,
therefore we omit it from the set of indexing strategies under study.
Some configurations are very expensive in terms of time or memory,
and we are unable to run some of the algorithms.
We set a timeout of 3 hours for each experiment,
and we abort the run if this timout limit is exceeded, or if the JVM crashes because of lack of memory.
In all cases of failure during our experiments, \minibatch fails due to timeout,
while \streaming because of memory requirements.

\spara{Setting.}
We run all the experiments on a computer with an Intel Xeon CPU E31230 @ 3.20$\,$GHz with 8$\,$MB of cache,
32$\,$GB of RAM (of which 16$\,$GB allocated for the JVM heap), and a 500$\,$GB SATA disk.
All experiments run sequentially on a single core of the CPU.
We warm up the disk cache for each dataset by performing one run of the algorithm before taking running times.
Times are averaged over three runs.

\subsection{Results}
\label{section:experiments:results}

\spara{Q1 ({\minibatch} vs.\ {\streaming}):}
Of the four datasets we use,
\minibatch successfully runs with all configurations on only two of them (\rcvo and \webspam).
As can be seen from Table~\ref{tab:datasets}, \blogs and \tweets are the largest datasets,
and \minibatch is not able to scale to these sizes.
On the other hand, \streaming is able to run with all configurations on all four datasets.
The overhead of \minibatch becomes too large when the horizon \hor becomes small enough,
as a new index needs to be initialized too frequently.
Table~\ref{tab:summary} shows a summary of the outcomes of the experimental evaluations for the various configuration.
Therefore, we restrict this comparison to \rcvo and \webspam.

Most of the time of \streaming and \minibatch, according to our profiling results,
is spent in the \cg phase while scanning the posting lists.
Hence, we first compare the algorithms on the number of posting entries traversed.
Figure~\ref{fig:entries-ratio} shows that \streaming usually does less total work,
and traverses around $65\%$ of the entries traversed by \minibatch.
The figure shows the ratio for \pure,
but other indexing strategies show the same trend (plots are omitted for brevity).

We now turn our attention to running time.
Figures~\ref{fig:time-rcv1} and~\ref{fig:time-webspam}
compare the running time of \streaming and \minibatch
on \rcvo and \webspam for all configuration on which both are able to run.
The decay factor \decay varies with the columns of the grid,
while the indexing scheme with the rows.
The two datasets present a different picture.

Clearly, for \rcvo (Figure~\ref{fig:time-rcv1}) algorithm \streaming is faster than \minibatch in most cases.
More aggressive pruning reduces the difference,
while in some cases with low \thr (useful for recommender systems) algorithm \streaming
can be up to $4$ times faster than~\minibatch.
\ltap is the exception, and for smaller \hor its performance is subpar, as detailed next.

Conversely, for \webspam the \minibatch algorithm
has the upper hand in most cases,
especially for larger decay factors (i.e., shorter horizons).
The different behavior is caused by the higher density of \webspam,
which has an average number of non-zero coordinates per vector,
which is almost two orders of magnitude larger than \rcvo (see Table~\ref{tab:datasets}).
For \streaming, this high density renders the lazy update approach inefficient,
as a large number of posting lists need to be updated and pruned for each vector,
especially for short horizons.
\minibatch can simply throw away old indexes, rather than mending them.

Summarizing our findings related to the {\minibatch}-vs.-{\streaming} study,
we conclude that \streaming is more efficient in most cases,
as it is able to run on all datasets, while \minibatch becomes too slow for larger datasets.
For very specific conditions of high density, \minibatch has a small advantage on \streaming.

\begin{figure}[t]
\begin{center}
\includegraphics[width=\columnwidth]{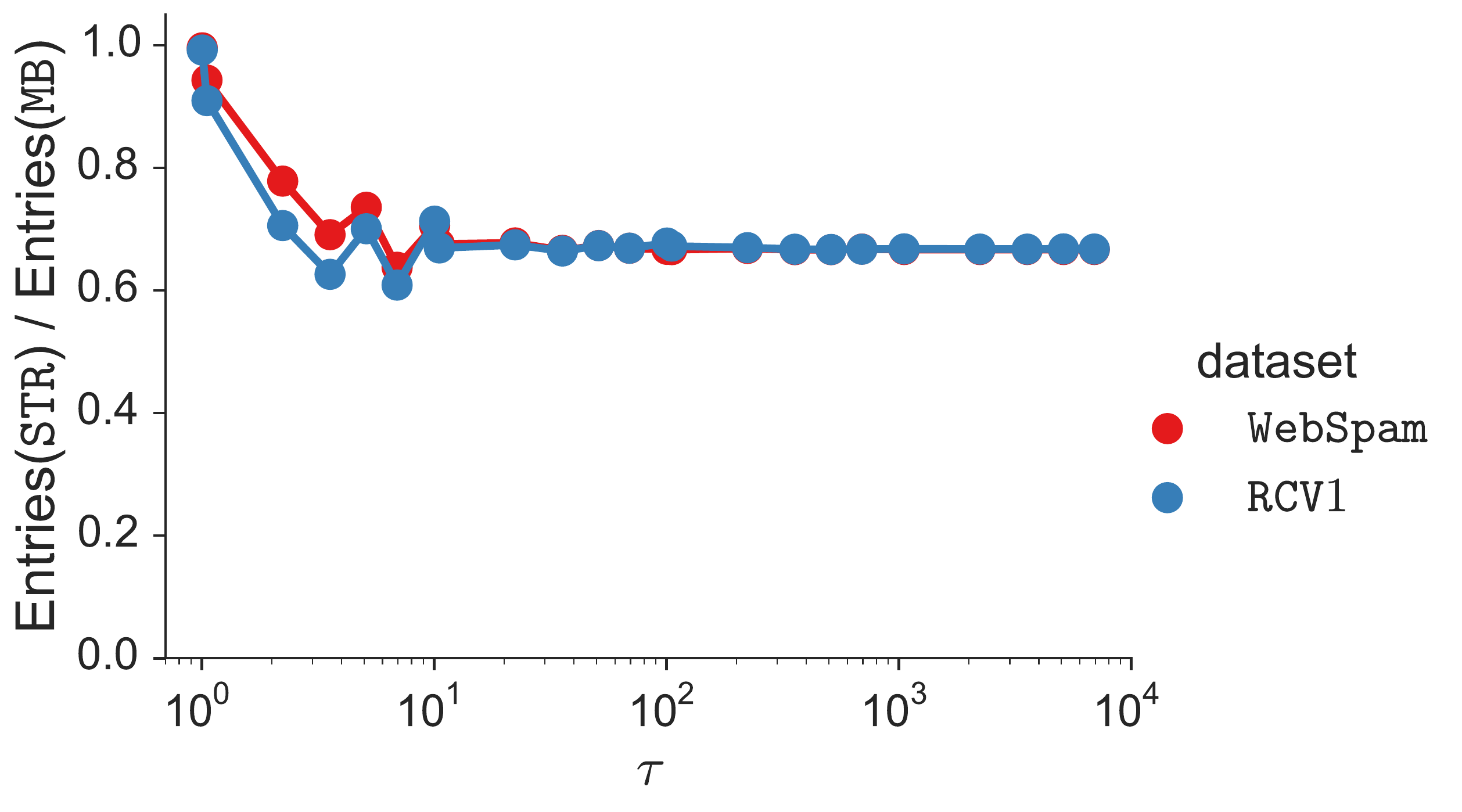}
\caption{\label{fig:entries-ratio}
Ratio of index entries traversed during \cg for \streaming compared to \minibatch.
For small \hor, the ratio tends to one, while for larger horizons \hor,
\streaming needs to traverse only $65\%$ of the entries compared to \minibatch.}
\end{center}
\vspace{-\baselineskip}
\end{figure}

\begin{figure*}[t]
\begin{center}
\includegraphics[width=\textwidth]{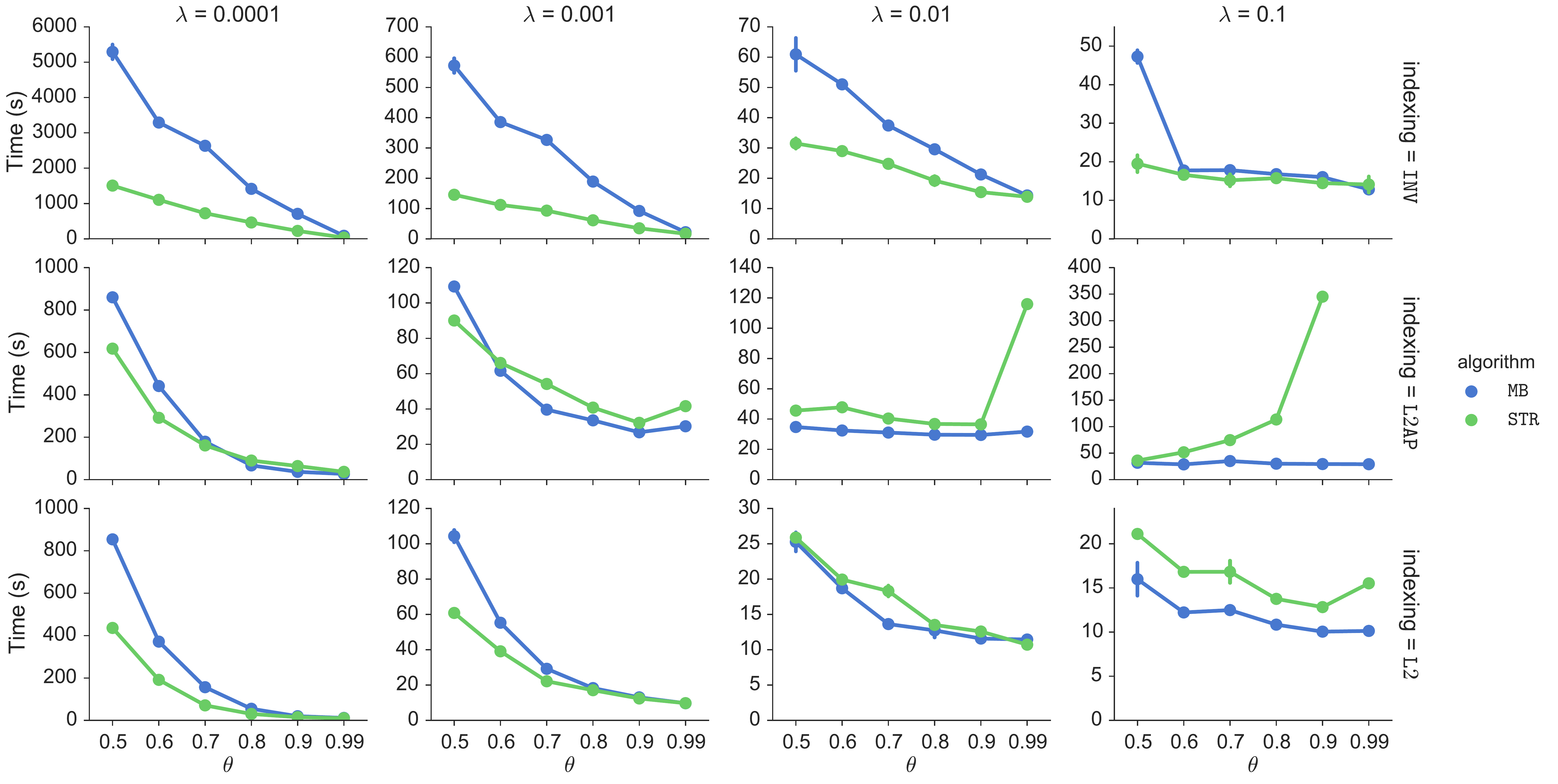}
\caption{Time taken by the \minibatch and \streaming algorithms
as a function of the similarity threshold \thr, on the \rcvo dataset.}
\label{fig:time-rcv1}
\end{center}
\vspace{-1.2\baselineskip}
\end{figure*}

\begin{figure*}[t]
\begin{center}
\includegraphics[width=\textwidth]{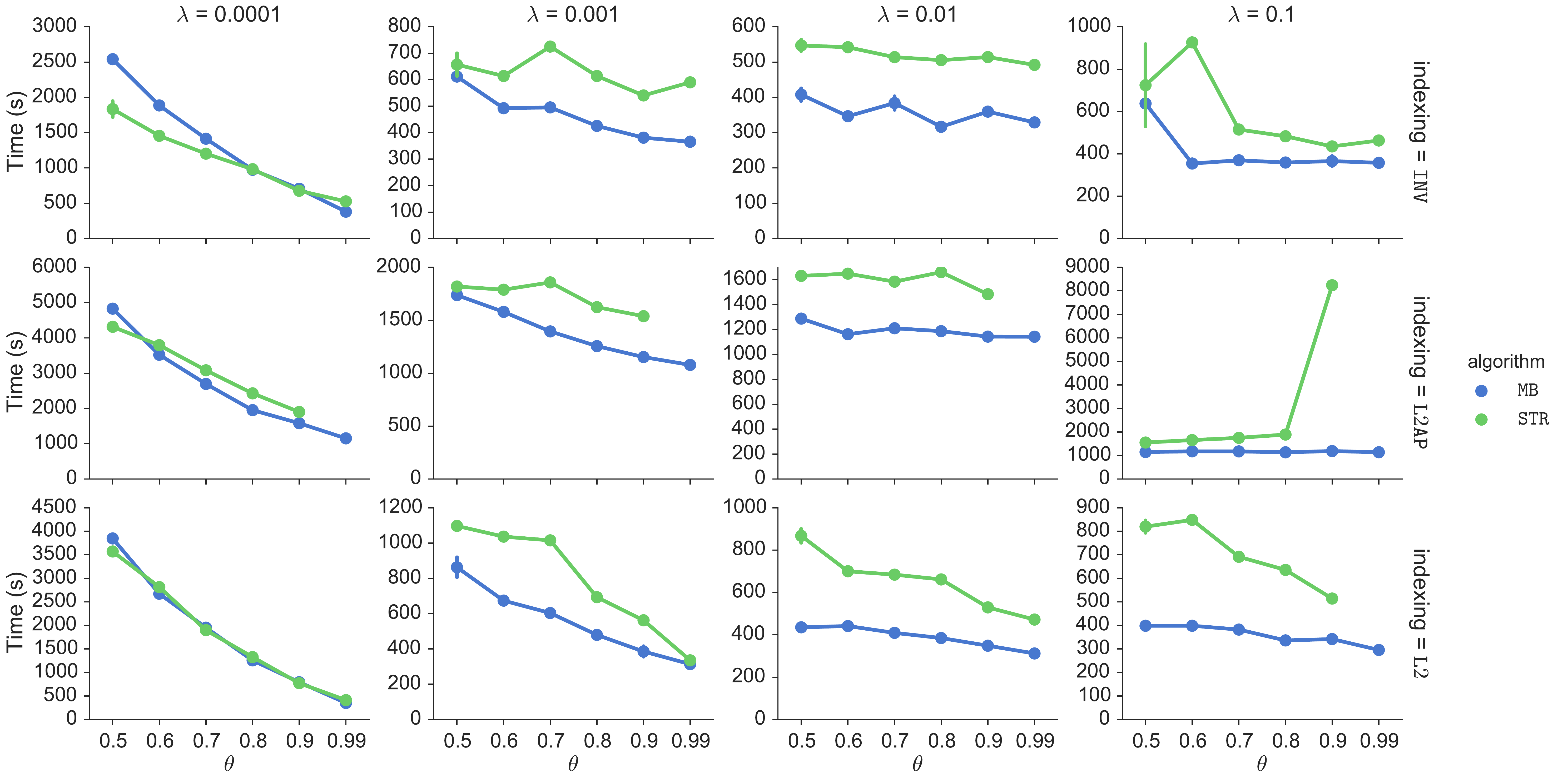}
\caption{Time taken by the \minibatch and \streaming algorithms
as a function of the similarity threshold \thr, on the \webspam dataset.}
\label{fig:time-webspam}
\end{center}
\vspace{-1.2\baselineskip}
\end{figure*}




\begin{figure*}[t]
\begin{center}
\includegraphics[width=\textwidth]{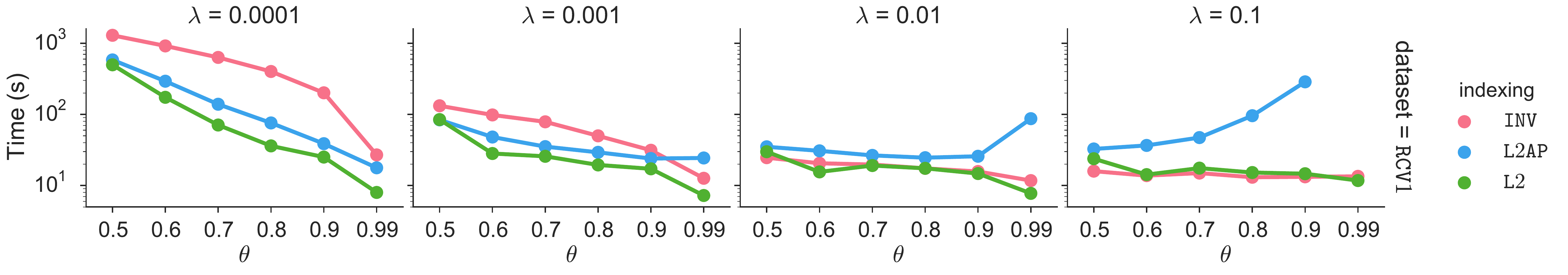}
\caption{Time taken by \streaming using different indexes
as a function of the similarity threshold \thr, for the \rcvo dataset.}
\label{fig:streaming-indexing-comparison-time}
\end{center}
\vspace{-1.2\baselineskip}
\end{figure*}

\begin{figure*}[t]
\begin{center}
\includegraphics[width=\textwidth]{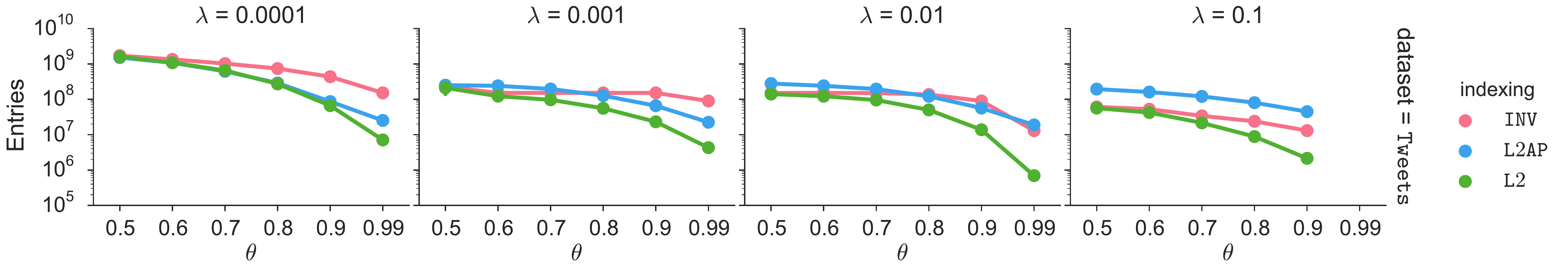}
\caption{Entries traversed by \streaming using different indexes
as a function of the similarity threshold \thr, for the \tweets dataset.}
\label{fig:streaming-indexing-comparison-entries}
\end{center}
\vspace{-1.2\baselineskip}
\end{figure*}

\spara{Q2 (Indexing schemes):}
Now that we have established that \streaming  scales better than \minibatch,
we focus on the former algorithm:
the rest of the experiments are performed on \streaming.
We turn our attention to the effectiveness of pruning, and compare \pure with \ii and \ltap.
For brevity, we show results for \rcvo only.
The other datasets follow the same trends.


Figure~\ref{fig:streaming-indexing-comparison-time} shows the comparison in time.
Several interesting patterns emerge.
First, \pure is almost always the fastest indexing scheme.
Second, \ii works well for short horizons,
where the overheads incurred by pruning are not compensated by a large reduction in candidates.
Instead, for larger horizons (low \thr and \decay)
the gains of pruning are more pronounced, thus making \ii a poor choice.
Finally, \ltap is slightly slower than \pure in most cases,
and much slower when the horizon is short.
Even though \pure uses a subset of the bounds of \ltap,
the overhead of computing the \ap bounds offsets any possible gain in pruning.

In addition, \ltap may need to re-index residuals of vectors,
and a shorter horizon causes more frequent re-indexings.
Therefore, the two rightmost plots show an upward trend in \ltap,
which is due to the re-indexing overhead.
The overhead is so large that we were not able to run \ltap for $\thr=0.99$ and $\decay=0.1$.
While there are possible practical workarounds
(e.g., use a more lax bound to decrease the frequency of re-indexing),
\pure always provides a better choice and does not require tinkering with the algorithm.

In general, a shorter horizon makes other pruning strategies less relevant compared to time filtering,
as can be seen from the progressive flattening of the curves from left to right as \decay increases.

Figure~\ref{fig:streaming-indexing-comparison-entries} shows the comparison
in the number of index entries traversed.
Clearly, \ii has usually the largest amount
due to the absence of pruning. 
The relative effectiveness of pruning for \pure is almost constant
throughout the range of \decay,
however the number of entries traversed decreases,
and so does the importance of filtering the index (compared to the time filtering).
This is in line with the behavior exhibited
in Figure~\ref{fig:streaming-indexing-comparison-time}
where the difference in time decreases for larger~\decay.

Interestingly, \ltap starts very close to \pure,
but as the horizon shortens
the number of entries traversed increases significantly,
evens surpassing \ii in the rightmost plot.
This result shows that \pure does not lose much in terms of pruning power,
despite not using the bounds from \ap.
Furthermore, the optimization to the implementation of time filtering
described in Section~\ref{section:implementation} (backwards posting-list scanning)
is not applicable.
The reason is that the \ap bounds required during indexing are data-dependent,
which leads to re-indexing and therefore there is no guarantee that the posting lists are sorted by time.
Therefore, \ltap ends up traversing more entries than \pure.

Note that the last points on the rightmost plot of
Figure~\ref{fig:streaming-indexing-comparison-entries}
($\thr=0.99$ and $\decay=0.1$) are not shown because their value is zero,
which cannot be represented on a logarithmic axis.
Indeed, in this configuration $\hor=0.1$ which is smaller than any timestamp delta, so the index gets continuously pruned before bing traversed, and therefore the output is empty.

Similar trends are observed for the number of candidates generated and
the number of full similarities computed.
Those results are omitted due to space constraints.

%

%

\begin{figure*}[t]
\begin{center}
\includegraphics[width=\textwidth]{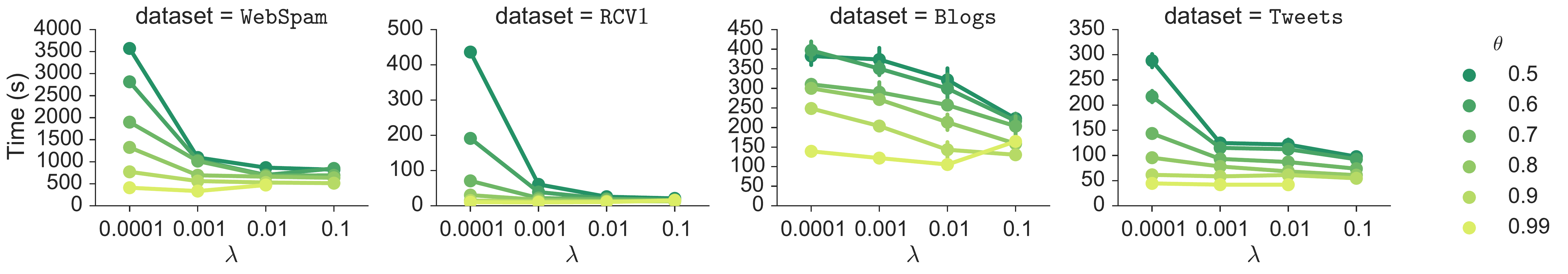}
\caption{Time taken by \streaming using the \pure index
as a function of the decay factor \decay,
for different similarity thresholds~\thr.}
\label{fig:parameters-lambda-time}
\end{center}
\vspace{-1.2\baselineskip}
\end{figure*}

\begin{figure*}[t]
\begin{center}
\includegraphics[width=\textwidth]{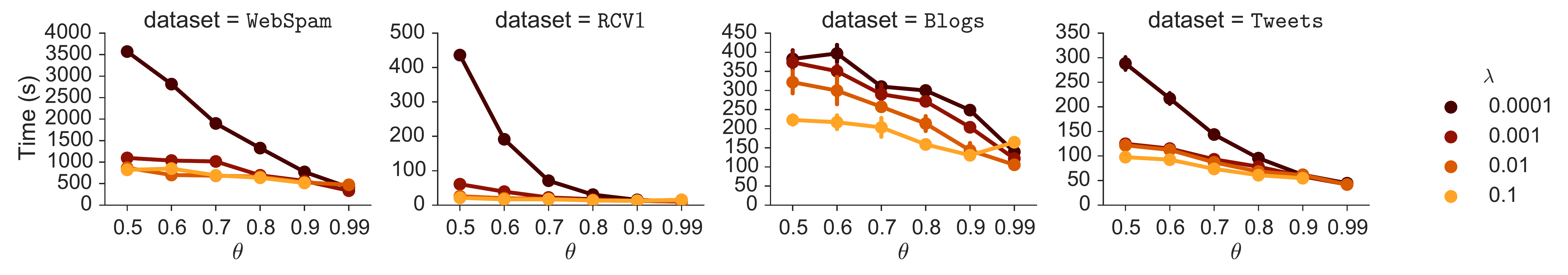}
\caption{Time taken by \streaming using the \pure index
as a function of the similarity thresholds \thr,
for different decay factors \decay.}
\label{fig:parameters-theta-time}
\end{center}
\vspace{-1.2\baselineskip}
\end{figure*}

\spara{Q3 (Parameters \thr and \decay):}
Next we study the effect of the parameters of our methods:
the decay factor \decay and the similarity threshold \thr.
Hereinafter, we present results only for \pure,
for which we have established that it is an effective and efficient pruning scheme.

Figure~\ref{fig:parameters-lambda-time} shows the effect of the decay factor \decay.
Increasing the decay factor decreases the computation time for all datasets.
However, the decrease is more marked for lower threshold \thr, and
flattens out quickly for higher values of \decay.

Figure~\ref{fig:parameters-theta-time} shows the effect of the similarity threshold \thr.
The pattern is similar to the previous figure, with the roles of \thr and \decay reversed.
Increasing the threshold decreases the computation time for all datasets, but is more marked for lower \decay, and flattens out quickly.

By combining the insights from the previous two figures, we can infer that both parameters
jointly affect the computation time.
The reason is that time is mainly affected by time filtering,
which is the most effective pruning strategy in this setting,
and its effect depends on the horizon \hor,
which jointly depends on the other two main parameters.

To explore this hypothesis, we perform a linear regression of the computation time on the value of the horizon \hor.
Figure~\ref{fig:parameters-tau-time} shows that the computation time is roughly a linear function of \hor.
In addition, from the inclination of the regression line
it is clear that \webspam is an outlier when compared to the other datasets, as previously mentioned.

\section{Conclusion}
We introduced the problem
of computing the similarity self-join in data streams.
Our approach relies on incorporating a forgetting
factor in the similarity measure so as to
be able to prune data items when they become old enough.
Given the new definition of time-dependent similarity,
we developed two algorithmic frameworks
that incorporate existing indexing techniques
for computing similarity self-join on static data,
and we extended those techniques to the streaming case.
We explored several different combinations of bounds used for index pruning,
which, in the context of streaming data,
lead to interesting performance trade-offs.
Our extensive analysis allows to understand better these trade-offs,
and consequently, to design an index that is optimized for streaming data.
Our analysis indicates that the {\streaming} algorithm
equipped with the {\pure} index is the most scalable
and robust across all datasets and configurations.

One promising direction for future
work is to experiment with dimension-ordering strategies
and evaluate the cost-benefit trade-off of maintaining a dimension ordering.
Other directions include
applying the developed techniques in real-world applications
for filtering near-duplicate items in data streams,
as well as extending our model for different
definitions of time-dependent similarity.

\begin{figure}[t]
\begin{center}
\includegraphics[width=\columnwidth]{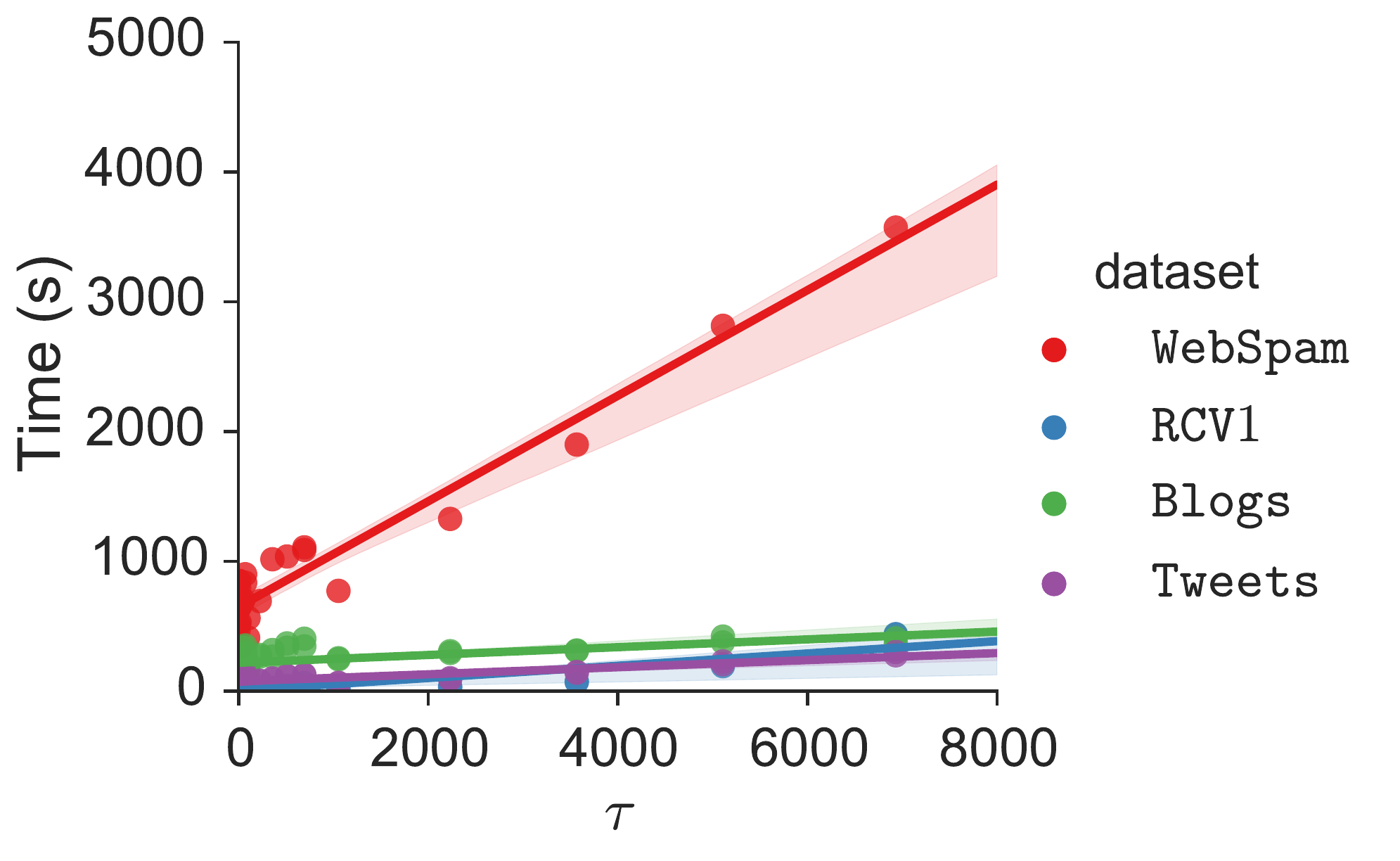}
\caption{Time taken by the \streaming algorithm when using the \pure index
as a function of the time horizon \hor.}
\label{fig:parameters-tau-time}
\end{center}
\vspace{-\baselineskip}
\end{figure}

\bibliography{sssj}
\bibliographystyle{nourlabbrvnat}

\appendix
\section{Correctness}
It is clear that our algorithms do not produce any false positive, as there is always a final check of the actual similarity threshold before the final output of any pair.

For false negatives, proof of correctness follows from the definition of time-based similarity and the Cauchy-Schwarz inequality.

\begin{theorem}
\cg phase is safe, i.e., pruning does not miss any similar pair.
\end{theorem}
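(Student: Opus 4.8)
\mpara{Proof plan.}
The plan is to fix a query vector $\vecx$ and an earlier vector $\vecy$ with $\simt(\vecx,\vecy)\ge\thr$, and argue that $\vecy$ survives the whole \cg\ phase, so that it is handed to \cv\ (where the final exact check then reports the pair). Two facts are needed: an \ic\ invariant guaranteeing that $\vecx$ shares at least one coordinate with the part of $\vecy$ actually stored in $\invind$ (so $\vecy$ is visited at all), and, at each of the three pruning points of \generate-\pure and \generate-\ap --- the size test (line~\ref{line:cg-unified:lowerbound}), the \rscore\ test (line~\ref{line:cg-unified:upperbound}), and the early \elltwo\ pruning (line~\ref{line:cg-unified:earlypruning}) --- that the quantity compared against $\thr$ is an upper bound on $\simt(\vecx,\vecy)$, hence $\ge\thr$, so the test cannot discard $\vecy$. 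For the streaming variants the same reasoning goes through after every bound is multiplied by the pair's decay factor $\ff{\vecx\vecy}\le 1$, and after observing that truncating each posting-list scan to entries with $\dt_{\vecx\vecy}\le\hor$ is harmless by the time-filtering property of Section~\ref{section:problem}: if $\dt_{\vecx\vecy}>\hor$ then $\simt(\vecx,\vecy)\le\ff{\vecx\vecy}<\thr$, so such a pair is not similar.

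First I would establish the \ic\ invariant. Say $\vecy$ was indexed before $\vecx$ arrived and its prefix was cut at position $p$, so $\resind$ holds $\prefix{\vecy}_p$ and the posting lists hold exactly the non-zero coordinates of $\vecy$ at positions $\ge p$. Since the cut is made at the first position where $\min\{\bs_1,\bs_2\}\ge\thr$, we have $||\prefix{\vecy}_p||=\bs_2<\thr$ just before it. If $\vecx$ had no non-zero coordinate in common with the indexed suffix of $\vecy$, then Cauchy--Schwarz and $||\vecx||=1$ give $\dotop(\vecx,\vecy)=\dotop(\vecx,\prefix{\vecy}_p)\le||\prefix{\vecy}_p||<\thr$, hence $\simt(\vecx,\vecy)\le\dotop(\vecx,\vecy)<\thr$, a contradiction; so $\vecy$ is visited while scanning some $\ind_j$ with $\xcoord_j>0$. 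In the streaming case I would additionally note that re-indexing --- triggered when $\cmvec$ grows --- only moves a cut position to the left and re-inserts coordinates into $\invind$, so the invariant is only strengthened.

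The engine behind the bound checks is a single observation about the reverse scan. Let $\ind_j$ be the \emph{first} posting list in which $\vecy$ is met. Then no coordinate of $\vecy$ stored in $\invind$ at a position $>j$ can have $\xcoord_{j'}>0$ (else $\vecy$ would have been met earlier), and $\vecy$ has no residual coordinate at a position $>j\ge p$; hence the whole dot product is carried by positions $\le j$, i.e.\ $\dotop(\vecx,\vecy)=\dotop(\prefix{\vecx}_{j+1},\prefix{\vecy}_{j+1})\le||\prefix{\vecx}_{j+1}||\cdot||\prefix{\vecy}_{j+1}||\le||\prefix{\vecx}_{j+1}||=\rs_2$, so $\rs_2\,\ff{\vecx\vecy}\ge\simt(\vecx,\vecy)\ge\thr$; for \ltap\ the \ap-side term $\rs_1$ at that moment equals $\sum_{j'\le j}\xcoord_{j'}\wycoorddecay_{j'}\ge\simt(\vecx,\vecy)$ by the coordinate-wise definition of $\wvecydecay$, so in all cases $\rscore=\min\{\rs_1,\rs_2\,\ff{\vecx\vecy}\}\ge\thr$ and line~\ref{line:cg-unified:upperbound} keeps $\vecy$. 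The size test passes because $\thr\le\simt(\vecx,\vecy)\le\dotop(\vecx,\vecy)\le|\vecy|\,\vm_{\vecx}\vm_{\vecy}$, i.e.\ $|\vecy|\,\vm_{\vecy}\ge\thr/\vm_{\vecx}=\sz_1$. For the early \elltwo\ pruning I would run an induction over the reverse scan showing that, as long as $\vecy$ is alive, $\candidates[\pointer{\vecy}]$ is exactly the partial dot product over the coordinates already processed; splitting $\dotop(\vecx,\vecy)$ at $j$ and bounding the prefix part by Cauchy--Schwarz then gives $\candidates[\pointer{\vecy}]+||\prefix{\vecx}_j||\cdot||\prefix{\vecy}_j||\ge\dotop(\vecx,\vecy)$, so $\ltbound\ge\simt(\vecx,\vecy)\ge\thr$ and line~\ref{line:cg-unified:earlypruning} never resets it. The \cv\ bounds ($\ps_1$, $\ds_1$, $\sz_2$) are handled identically.

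The step I expect to be the real obstacle is setting up that last induction cleanly. One must track which non-zero coordinates of $\vecy$ lie in $\invind$ versus in $\resind$, and prove \emph{simultaneously} that $\candidates[\pointer{\vecy}]$ is never prematurely zeroed at an earlier coordinate of $\vecx$ --- which is itself part of the claim, so the induction has to carry the joint invariant ``$\vecy$ alive $\Rightarrow\ \candidates[\pointer{\vecy}]$ equals the correct partial dot product'' across checkpoints in processing order. In the streaming setting this must then be reconciled with re-indexing, which can put out-of-order entries into the posting lists, and with the scan being cut short by the $\dt_{\vecx\vecy}\le\hor$ guard; the latter follows immediately from time filtering, but verifying that re-indexing does not disturb the reverse-scan invariant, and that dropping entries outside the horizon is exactly compensated by the $\ff{\vecx\vecy}$ factors that now appear in every bound, is the delicate part.
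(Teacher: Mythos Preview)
Your plan is correct and rests on the same core ingredients the paper uses: Cauchy--Schwarz applied to prefix/suffix splits of the dot product, together with the observation that the decay factor $\ff{\vecx\vecy}\le 1$ can be carried through every inequality and only tightens the bounds. The paper's own proof is essentially a three-line sketch --- it writes down $\simt(\vecx,\vecy)\le\norm{\vecx}\,\norm{\vecy}\,\ff{\vecx\vecy}\le\ff{\vecx\vecy}$ and the split $\simt(\vecx,\vecy)\le(\norm{\prefix{\vecx}_j}\,\norm{\prefix{\vecy}_j}+\norm{\suffix{\vecx}_j}\,\norm{\suffix{\vecy}_j})\,\ff{\vecx\vecy}$, and then points to the specific pruning lines these justify --- without spelling out the invariants you work through. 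Your proposal is considerably more thorough: you additionally verify the \ic\ invariant (that a truly similar $\vecy$ is reachable through the index at all), the size test, and the \ap-side bound $\rs_1$ via the coordinate-wise definition of $\wvecydecay$, none of which the paper's appendix proof addresses explicitly (these are inherited from the batch literature and the paper focuses only on the new decay-augmented \elltwo\ rules). The joint induction you flag as the delicate step --- maintaining ``$\vecy$ alive $\Rightarrow$ $\candidates[\pointer{\vecy}]$ equals the correct partial sum'' so that neither the \rscore\ gate nor the early \elltwo\ reset can fire --- is exactly the right structure, and the paper simply does not carry it out.
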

\begin{proof}
\[
\simt(\vecx,\vecy) = 
\dotop(\vecx,\vecy) \, e^{- \decay\abs{\timestamp{\vecx}-\timestamp{\vecy}}} \leq
\norm{\vecx} \, \norm{\vecy} \, \ff{\vecx \vecy} \leq
\ff{\vecx \vecy}
\]
The first inequality supports the pruning rules at Algorithm~\ref{algorithm:cg-unified-str} Lines 10-12, and Algorithm~\ref{algorithm:cv-unified-str} Lines 3 and 6 (green code).
The second inequality supports the pruning rule at Algorithm~\ref{algorithm:cg-unified-str} Line 7 (green code).

In particular, for any index $j \in [1, \dimension]$, we have
\[
\simt(\vecx,\vecy) \leq
( \norm{\prefix{\vecx}_j} \, \norm{\prefix{\vecy}_j} +
\norm{\suffix{\vecx}_j} \, \norm{\suffix{\vecy}_j} ) \, \ff{\vecx \vecy}.
\]
In the algorithm, the coordinate $j$ corresponds to any of the indexed dimensions.
\end{proof}

The proof of correctness relies on the Cauchy-Schwarz inequality, and the fact that the time-based similarity adds a forgetting \emph{factor}.
Therefore, the factor can be carried together with the proof.
The forgetting factor actually makes the bound tighter for the streaming setting, as $\ff{\vecx \vecy} \leq 1$.

\end{document}